\documentclass[lettersize,journal]{IEEEtran}
\usepackage{amsmath,amsfonts}
\usepackage{algorithm}
\usepackage{array, framed}
\usepackage{subcaption}
\usepackage{amsthm}

\usepackage{stfloats}
\usepackage{url}
\usepackage{verbatim}
\usepackage{graphicx}
\usepackage{cite}
\usepackage{nicefrac}
\usepackage{siunitx}
\usepackage{booktabs}
\usepackage{
  color,
  float,
  epsfig,
  wrapfig,
  graphics,
  graphicx,
  subcaption
}
\usepackage{listings}
\usepackage{xcolor}
\usepackage{amsmath}
\usepackage{array}
\usepackage{caption}

\lstdefinestyle{ieee}{
  basicstyle=\ttfamily\small,
  keywordstyle=\color{blue}\bfseries,
  commentstyle=\color{gray},
  stringstyle=\color{purple},
  numbers=left,
  numberstyle=\tiny,
  numbersep=5pt,
  showstringspaces=false,
  breaklines=true,
  frame=single,
  captionpos=b
}
\usepackage{textcomp,amssymb}
\usepackage{setspace}
\usepackage{latexsym,fancyhdr,url}
\usepackage{enumerate}
\usepackage{algpseudocode}
\usepackage{graphics}
\usepackage{xparse} % argument parsing -- \edist
\usepackage{xspace}
\usepackage{multirow}
\usepackage{csvsimple}
\usepackage{balance}
\usepackage{hyperref}
\usepackage{adjustbox}
\usepackage{booktabs}

\usepackage{pifont}
\usepackage{
  tikz,
  pgfplots,
  pgfplotstable
}
\usepackage{float}
\usetikzlibrary{circuits.logic.US,calc}
\usetikzlibrary{
  shapes.geometric,
  arrows,
  external,
  pgfplots.groupplots,
  matrix
}
\pgfplotsset{compat=1.9}
\newcommand{\cmark}{\ding{51}} % ✓
\newcommand{\xmark}{\ding{55}} % ✗

\usepackage{mathtools,}

\DeclareMathAlphabet{\mathcal}{OMS}{cmsy}{m}{n}
\begin{document}

\author{Nilotpola Sarma,  Vaishali Ghanshyam Chaudhuri, and Chandan Karfa}

% The paper headers
\markboth{Journal of \LaTeX\ Class Files,~Vol.~14, No.~8, August~2021}%
{Shell \MakeLowercase{\textit{et al.}}: A Sample Article Using IEEEtran.cls for IEEE Journals}

\DeclareGraphicsExtensions{%
    .png,.PNG,%
    .pdf,.PDF,%
    .jpg,.mps,.jpeg,.jbig2,.jb2,.JPG,.JPEG,.JBIG2,.JB2}

\title{Controller–Datapath Aware Verification of Masked Hardware Generated via High-Level Synthesis}
\maketitle

\begin{abstract}

  Masking is a countermeasure against Power Side-Channel Attacks (PSCAs) in both software and hardware implementations of cryptographic algorithms. Compared to software masking, implementing masked hardware is time-consuming and error prone. Recent approaches, therefore, rely on High-Level Synthesis (HLS) tools to automatically generate masked Register-Transfer Level (RTL) hardware from verified masked software, significantly reducing design effort. Since HLS was never developed for security, HLS optimizations may impact PSCA security of the generated RTL. As a result, verifying the PSCA-security of HLS-generated masked RTL is crucial. Existing hardware masking verification tools can verify masked hardware, but may produce false positives when applied to HLS-generated designs with controller–datapath architectures obtained due to resource-shared datapath obtained via HLS. 
    This work proposes a hardware masking verification strategy for HLS-generated masked hardware.
    Our toolflow MaskedHLSVerif, performs state-wise formal verification of controller–datapath RTL obtained via HLS, thereby avoiding false positives caused by resource-shared datapaths. Our tool flow correctly verifies standard cryptographic benchmarks, consisting of cascaded masked gadgets and the PRESENT S-box masked with gadgets, where existing tools like REBECCA \cite{rebeccabloem2018formal} reports false positives. The proposed tool-flow is able to detect masking flaws induced by HLS-optimizations as well.

\end{abstract}

\begin{IEEEkeywords}
masking, formal verification of masking, high-level synthesis
\end{IEEEkeywords}

% \section{Introduction}
\section{Introduction}
\label{sec:intro}

 Embedded devices implementing a cryptographic algorithm are susceptible to Power Side-Channel Attacks (PSCAs) \cite{kocherDPA}, where an attacker uses the target device's power consumption information to extract the secret values processed by the underlying cryptographic algorithm. Masking \cite{dom2016} is a common countermeasure against PSCAs. Masking splits the secret information processed in the algorithm into random shares. Thereafter, execution proceeds by processing these shares independently, taking care to \textit{re-randomize} computations that cause their recombination. Masking can be applied at the hardware \cite{dom2016} and software levels \cite{ProvablySMOAES2004}. In addition to ensuring the correct and independent sharing of intermediate computations as required by software masking, hardware masking must also be resistant to glitches and transitions \cite{glitch2005mangard}. Also, masking in general, is applied manually by careful examination of design parameters. Thus masking a cryptographic hardware requires expertise across cryptography, digital design, design optimization and verification. There are several valid masked versions of any cryptographic design for a given security order, differing in the latency and number of random bits needed for re-masking \cite{ghpc, HPC2020, smoothpassage}. Most of these optimizations are design-specific \cite{cotg} and masking-scheme specific \cite{smoothpassage}. Therefore, manually masking a design is time-consuming and  error prone. 
 %Thus synthesizing optimally masked hardware is time consuming and difficult. 

High-Level Synthesis (HLS) tools automatically convert a C specification to Register-Transfer Level (RTL) and optimize latency and/or area in significantly less time \cite{vivado}. HLS can automatically generate an RTL masked design in Verilog or VHDL from a masked software specification of a cryptographic algorithm. Thus, given the difficulty of manually masking hardware, recent trends focus on using HLS to generate masked hardware from masked software \cite{shortestpath2021, maskedhls}.  
 This results in reduced design time, enabling Design-Space Exploration (DSE). 
  
  However, HLS was never developed with security in mind \cite{pundir2022analyzing}. The objective of HLS is to generate a functionally correct and efficient implementation.
 As a result, several scenarios have been discovered where the HLS optimizations have led to PSCA-vulnerabilities in the synthesized design \cite{shortestpath2021, maskedhls} while keeping the functional correctness intact. These optimizations are driven by a target latency or resource constraint provided by the user during synthesis. They may also be automatically applied by the HLS tool without any user provided directives. A few such optimizations are expression-balancing, re-association, etc. It may be noted that there are many such optimizations and their consequence on the PSCA-security of HLS-generated masked hardware is yet to be explored. Therefore, it is necessary to verify the PSCA-security of HLS-generated masked hardware.

% Failure: reassociation //in results

 % even for manual designs fv is important
There are many existing hardware masking verification tools that can verify the PSCA security of masked designs \cite{rebeccabloem2018formal, verica, pengfei2020formal, gigerl2021coco, hadzic2021cocoalma}. {\it We tried to use them to verify the security of HLS-generated masked hardware, albeit unsuccessfully. This is due to the unique datapath generated by HLS. Given a time-division multiplexed datapath (for resource-sharing) generated by HLS under a resource constraint, many combinations of the datapath operations will likely never be executed. When an existing formal verification tool is used to verify such a datapath, it will end up checking the security for all these possible combinations, including those that are never going to get executed. Thus, there will be false positives and, hence, inaccurate security estimation of the design.}

 To the best of our knowledge, verification of HLS generated masked designs was never addressed in any previous formal verification tools for masked hardware. Moreover, the State-Of-The-Art (SOTA) tools fail to verify them and produces false-positive results. In this work, we demonstrate how we came up with a tool-flow for verifying HLS-generated masked hardware and overcame this problem. This toolflow proposed the usage of SOTA tools in a way that could verify HLS-generated masked hardware without the aforementioned false positives by abstracting out the \textit{active datapath} at each state of the design's Finite State Machine (FSM).
 %and verifying them using an the existing verification tool in the loop. 
 Specifically, the  contributions of our work are as follows: 
 \begin{itemize}
     \item We identified the limitations of the SOTA hardware masking verification tools in verifying HLS generated masked hardware.      
     \item We have discussed the impact of optimizations of various steps of HLS on the security of masked hardware.
     \item We developed a formal verification method, MaskedHLSVerif that can verify the correctness of masking security of HLS generated hardware.
     \item A thorough experiment with six benchmark including cascaded AND gadgets and PRESENT cipher's S-box masked with gadgets, demonstrate the usefulness of our method. Our verification method can correctly identify masking flaws due to HLS-induced optimizations.
 \end{itemize}
 The rest of the paper is organized as follows: The motivation of work with example is presented in Section \ref{sec:motivation}. The background and related work are discussed in Section \ref{sec:background}. The impact of HLS on masking security is given in Section \ref{sec:impactofhls}. Our verification method is presented in Section \ref{sec:methodology}. Detailed experimental results are given in Section \ref{sec:results}. The paper is concluded in Section \ref{sec:conclusion}. 
\section{Motivation}
\label{sec:motivation}

Formal verification tools for masked hardware \cite{rebeccabloem2018formal, gigerl2021coco} abstract away control flow and conservatively assume that all syntactically possible dataflow paths may be active. Thus they check all possible input combinations to every operator in the design at each clock cycle. For HLS-generated masked designs, the scheduling of operations (which operation is executed in each clock cycle) are decided by resource constraints, which enable HLS to share hardware resources (such as multipliers) across multiple operations scheduled in different clock cycles. Thus, not all datapath paths are active at every clock. 
\begin{figure}
    \centering
    \includegraphics[width=\linewidth]{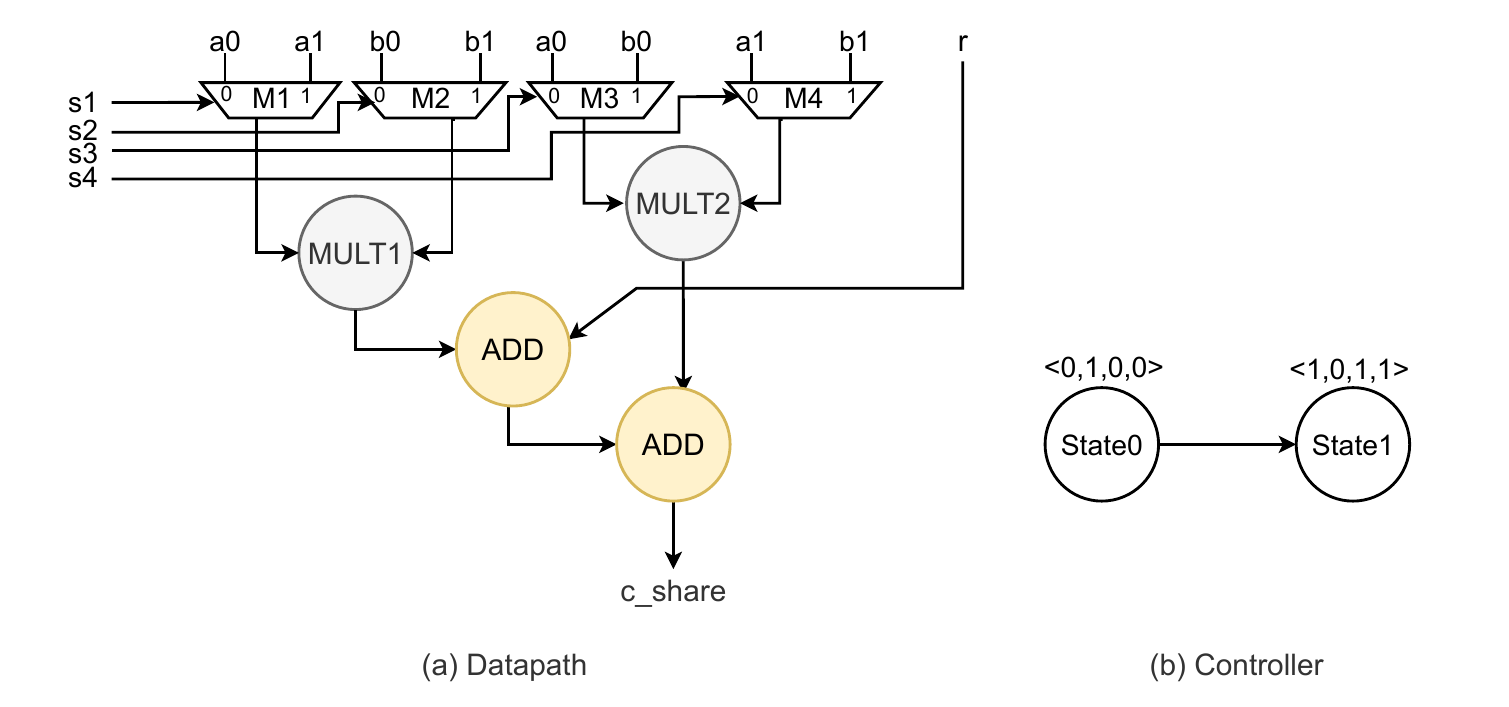}
    \caption{Motivating example for state-based verification of masked hardware}
    \label{fig:motivation0}
\end{figure}
Consider the masked multiplier in Figure~\ref{fig:motivation0}, synthesized with a resource constraint of two multipliers. The four multiplications required by the masking scheme are distributed across two cycles, and multiplexers supply the appropriate input pairs in each state. For example, the valid input pairs for the multiplier $MULT2$ are only
\((a_0, b_0)\) and \((a_1, b_1)\) as these do not need to be masked subsequently after combination as done for the outputs of the multiplier $MULT1$. This is required for secure masking.
Combinations such as \((a_0, b_1)\), although structurally possible, therefore, \emph{never occur} in the actual design because the FSM controls the MUX select lines to never select those inputs by design.

However, \textsc{REBECCA} does not account for the FSM and still evaluates the multiplier under all theoretically possible MUX inputs, including unreachable ones. Consequently, it also evaluates the security of the datapath obtained for \((a_0, b_1)\) for the first multiplier, thus registering a leakage which is a false positive. Thus, REBECCA may report a first-order leakage even when the hardware is secure.

\textbf{Underlying limitation of existing tools: }
These observations reveal a fundamental limitation: \textsc{REBECCA} performs a naive combinational analysis that checks leakage under all syntactically possible MUX inputs, without considering whether these combinations are actually reachable in a given FSM state. This results in false positives for FSM-based, time-division multiplexed datapaths. Table~\ref{tab:motivation} further illustrates this behavior: whenever the RTL includes an FSM (and the corresponding multiplexers), \textsc{REBECCA} produces false positives, whereas designs without FSMs are verified correctly.

\medskip
\noindent\textbf{Our Approach.}
In this work, we propose a \emph{state-aware verification methodology} that leverages the structure of the FSM to:
\begin{enumerate}
    \item identify the portion of the datapath active in each state,
    \item enumerate only the valid input combinations reachable in that state, and
    \item verify masking security state-by-state, ensuring both local and global correctness.
\end{enumerate}
By restricting verification to reachable configurations, our method achieves accurate and efficient security evaluation for HLS-generated masked designs.

{\scriptsize
\begin{table}[htbp]
\centering
\caption{Results obtained from Rebecca}
\label{tab:motivation}
\renewcommand{\arraystretch}{1.2}

\begin{adjustbox}{max width=\columnwidth}
\begin{tabular}{l cc cc cc}
\toprule
\multirow{2}{*}{\textbf{Example}} 
& \multirow{2}{*}{\textbf{FSM}} 
& \multirow{2}{*}{\textbf{Reg}} 
& \multicolumn{2}{c}{\textbf{Expected Op}} 
& \multicolumn{2}{c}{\textbf{Rebecca Op}} \\
\cmidrule(lr){4-5}
\cmidrule(lr){6-7}
& & & $s$ & $t$ & $s$ & $t$ \\
\midrule
AND Gate Circuit     & \xmark & \xmark & True  & True  & True  & True  \\
Circuit with MUX     & \xmark & \xmark & False & False & False & False \\
DOM RTL version1     & \xmark & \xmark & True  & False & True  & False \\
DOM RTL version2     & \xmark & \cmark & True  & True  & True  & True  \\
DOM RTL version3     & \cmark & \xmark & True  & False & \textcolor{red}{False} & \textcolor{red}{False} \\
DOM RTL version4     & \cmark & \cmark & True  & True  & \textcolor{red}{False} & \textcolor{red}{False} \\
\midrule
COMAR RTL            & \cmark & \cmark & True  & True  & \textcolor{red}{False} & \textcolor{red}{False} \\
HPC1 RTL             & \cmark & \cmark & True  & True  & \textcolor{red}{False} & \textcolor{red}{False} \\
HPC2 RTL             & \cmark & \cmark & True  & True  & \textcolor{red}{False} & \textcolor{red}{False} \\
PRESENT DOM RTL      & \cmark & \cmark & True  & True  & \textcolor{red}{False} & \textcolor{red}{False} \\
PRESENT HPC1 RTL     & \cmark & \cmark & True  & True  & \textcolor{red}{False} & \textcolor{red}{False} \\
\bottomrule
\end{tabular}
\end{adjustbox}
\end{table}

}

% \section{Motivation}
% \input{sections/motivation_rewritten}
% \section{Background and Related Works}
\section{background}
\label{sec:background}
This section presents a brief background required to understand the contributions of this article. It also provides a background on REBECCA \cite{rebeccabloem2018formal}, a formal verification tool for masked hardware.
\subsection{Masking}
\label{subsec:masking}
Given a cryptographic algorithm, masking it for a particular order $d - 1$, involves splitting the secret inputs into $d$ shares. Thereafter the computations in the algorithmic description are replaced by their masked versions, i.e., the entire computation-tree is split into $d$ shares. In doing so non-linear functions need to be carefully handled to ensure independence of their output shares (and hence the input shares to the next dependent function). Typically, masking a non-linear function requires \textit{mask refreshing} operations that require extra register stages (that account for glithces in hardware that causes recombination of shares) and fresh random bits and \textit{re-masking} operations that make use of fresh random bits to maintain the independence of shares at the output. This can be done in several different ways each with a different area/latency/number of random bits used for \textit{re-masking}. Coming up with an optimal and correct masked version of a program, that is glitch-robust, is difficult \cite{ThresholdImplementations2006}. Hence, \textit{gadgets} were introduced to mask cryptographic hardware in a modular way. This also aided the scalable verification of gadgets with certain properties could be composed with another verified gadget without compromising their SCA resistance.  Several gadgets like the Domain-Oriented Masking (DOM) \cite{dom2016} gadget, the Hardware Private Circuits version $1$  (HPC1),  and $2$ (HPC2) \cite{HPC2020} gadgets and the COMAR \cite{comar} gadgets exists. These have been used to obtain the masked benchmarks in this paper. 
% define static and transient
\subsection{The Extended Probing Model}
\label{subsec:probing model}
Masking works under the assumption that a real power side channel attacker can be approximated to an attacker, the strength of which is determined by how many intermediate wires (intermediate computation results in software) can be statistically infer from the overall power consumption of the device alone. This is called classic probing model proposed by Ishai et. al., \cite{isw} and is referred to as the static state (S) of a design throughout this paper. 

The static state, however, does not take into account the effects of glitches and transitions on the masking security of the hardware \cite{glitch2005mangard}. Hence, the extended probing model was introduced \cite{robustprobingmodel2018} which assumed that, under a glitchy scenario where all inputs at each gate arrive at different times, the output of each wire would leak the outputs of all previous gates in its fan-in, till a register stage (last synchronization point) is encountered or till the input of the design. This captures the information available to an attacker for all possible combination of arrival times of inputs in all gates of a design. This behavior is referred to as the transient state (T) of a design in the rest of the paper.
%define gadget based masking

%define how sota formal verificaiton tools work
\subsection{Formal Verification of Masked Hardware}
\label{subsec:sota}
Several formal verification tools have been proposed to analyze the probing security of masked hardware implementations. REBECCA~\cite{rebeccabloem2018formal} introduced a SAT-based verification approach that tracks the support of Walsh/Fourier expansions of intermediate signals to detect secret-dependent leakages under static and transient probing models. CoCo~\cite{gigerl2021coco} extends this line of work by enabling compositional verification of masked circuits, improving scalability while still assuming datapath only designs. VERICA~\cite{verica} employs symbolic reasoning to verify masking correctness at the gate level.  While these state-of-the-art tools are effective for handwritten and pipelined masked hardware, they implicitly assume fixed datapaths and over-approximate all possible executions in a design with a controller and a datapth. \textbf{Note: }We use REBECCA as the underlying verification tool because it was developed specifically  for formal analysis of masked hardware circuits under the robust probing model, and takes as input hardware designs at RTL. Tools such as COCO \cite{gigerl2021coco} were originally designed for co-verification of masked software on processor netlists. Tools like VERICA \cite{verica} support combined  side-channel and fault analysis but do not directly target scalable verification of masked RTL as needed in our flow. 

% Consequently, when applied to HLS-generated designs featuring FSM-controlled, resource-shared datapaths, these tools may report false positives by verifying infeasible datapath combinations.
\subsection{REBECCA}
\label{subsec:rebecca}
Since, we have used the REBECCA \cite{rebeccabloem2018formal} formal verification tool to present our findings in this paper, this section provides a brief introduction into the working of REBECCA.
\subsubsection{REBECCA overall flow}
\label{subsubsec:rebeccaflow}
   REBECCA takes as input a masked Verilog design, and a label file (.txt) that contains the labelling information of each input signal as a share, random, or public value.  Additionally, it takes as input the masking security order ($d$) and the leakage model (static (S) and transient (T)). Internally, REBECCA invokes Yosys to synthesize the design and generate a JSON-based structural description containing gate-wise information. From this representation, REBECCA extracts the list of gates $G$, the set of wires $W$, and the Boolean function $f(g)$ computed by each gate $g \in G$. The circuit is modeled as a directed acyclic graph (DAG), where vertices correspond to gates and edges correspond to signal wires, and all primary inputs are classified into disjoint sets of secret variables $S$, mask variables $M$, and public variables $P$ obtained from the label file. Based on these, REBECCA performs formal verification to assess whether the design is secure under the specified probing model (S or T). In order to do so a sound but conservative estimation of the Fourier coefficient of each gate in the design is done. This reflects statistical dependence of the output of each gate on each input of the design and thus isolates secret dependent computations and thus possible leakages. This also enables localization of the flaws detected by the verifier. REBECCA leverages SAT-based formal verification making it faster than previous methods. The output of REBECCA is a log file with which indicates if the design is secure or not and if not then generates the location of the violation, helping the designer to debug the masked hardware. 
   % Figure \ref{fig:rebeccaflow} presents an overall flow of the REBECCA verification tool, the steps of which are detailed in this subsection.
   
        % \begin{figure}
        %     \centering
        %     \includegraphics[width=0.8\linewidth]{figures/backround/REBECCA.drawio.pdf}
        %     \caption{REBECCA overall flow}
        %     \label{fig:rebeccaflow}
        % \end{figure}
        
\subsubsection{Fourier Expansion of a Boolean Function}
\label{subsubsec:fourier}
% why fourier coefficeint estimation is inaccurate for security analysis
Underlying REBECCA's verification engine is the representation of Boolean signals in the $\{\pm1\}$ domain rather than the conventional $\{0,1\}$ domain. This transformation enables an algebraic structure that makes Fourier analysis of Boolean functions scalable for leakage assessment via REBECCA. When Boolean values are mapped via $0\mapsto +1$ and $1\mapsto -1$, the XOR operation becomes:
\[
a \oplus b = (a + b) \bmod 2
\]

This expression is hard to analyze algebraically as its parity is nonlinear. However, in the \(\{+1,-1\}\) domain:
\[
(-1)^a \cdot (-1)^b = (-1)^{a \oplus b}
\]
So XOR becomes multiplication:
\[
\overline{a \oplus b} \;=\; \bar{a} \cdot \bar{b}
\]
While negation becomes multiplication by $-1$, Boolean AND becomes a polynomial making it easier for Fourier analysis. 

A Fourier expansion expresses any two-input Boolean function $f(x, y)$ ($x, y \in X$; $X \in \{\pm{1}\}$ is the domain of $f$) as a linear combination of the basis functions $\{1, x, y, xy\}$. Accordingly, the AND function can be written in the form $\tilde f(x,y)=a+bx+cy+dxy$, where the coefficients are determined by matching the function values to the AND truth table. Solving the resulting linear system gives the values of $a, b, c,$ and $d$. Thus, the Fourier expansion of $AND(x)$ is given by $\tfrac{1}{2} + \tfrac{1}{2}x + \tfrac{1}{2}y - \tfrac{1}{2}xy$. Each of the \textit{basis functions} corresponds to a distinct dependency pattern: constant, single-variable or joint, allowing REBECCA to track the presence of secret-dependent correlations by reasoning solely about which basis components appear in a gate’s expansion. In the context of masking violation, this means: a non-zero Fourier coefficient of a monomial in the expansion indicates statistical correlation between the gate output and the XOR of the variables making up the monomial. This forms the basis of REBECCA: \textit{a gate output is masking-insecure when it correlates with a subset of input values that contains secret variables but no mask variables.}

\subsubsection{Label in REBECCA}
\label{subsubsec:labelling}
In REBECCA, a \textit{label} of a gate represents the set of monomials over the input variables whose coefficients are non-zero in the Fourier expansion Boolean function computed by a gate. Rather than tracking the exact coefficient values, labels conservatively capture which combinations of inputs may influence a signal. Given the labels of the input to a gate, the label of its output can be derived via the gate-specific rules of labeling in Table \ref{tab:rules}. They can be derived from the Fourier expansion of the AND and XOR operations following the discussion in Section \ref{subsubsec:fourier}.

Given two inputs of a gate labeled as (sets of monomials), $S_a$ and $S_b$, REBECCA defines abstract operators $\oplus$ and $\otimes$. The operator $\oplus$ corresponds to cancellation of identical monomials, i.e., $S_a \oplus S_b = (S_a \cup S_b) \setminus (S_a \cap S_b)$. The $\otimes$ operator models monomial multiplication and is defined as $S_a \otimes S_b = \{ m_a m_b \mid m_a \in S_a,\, m_b \in S_b \}$ capturing the joint-distribution due to a \textit{non-linear} AND operation for instance. In the presence of glitches, it follows from the extended probing model that \cite{robustprobingmodel2018} a linear gate may temporarily output intermediate values (which algebrically corresponds to a non-linear gate's behavior) due to the mismatched input arrival times of its inputs. This is conservatively modeled by REBECCA by labeling a linear gate's output using rules for a non-linear gate under the transient case. The rules for MUX can be derived form the rules for the AND, NOT and XOR operations. 

\textbf{Example: Labelling in REBECCA}
Given the circuit in Figure \ref{fig:label_propagation}a, the inputs are labeled as shown in Table \ref{tab:example} where $s_m$ denotes a secret masked with a value $m$, $m_s$ denotes a mask which combined with the masked secret $s_m$ would leak the secret $s$.  
The labeling proceeds from the input towards the output such that each gate's labels are derived from the labels of its inputs. For the given circuit, the inputs are labeled as shown in Table \ref{tab:example}.
Thus, for the gate $G_1$, a XOR gate, the labels are obtained as $S_{g1} = \{s, m_s\} \oplus \{m_1\} = \{\{s, m_1, m_s\}\}$. However, $T_{g1} = \{s, m_s\} \otimes \{m_1\} = \{\phi, \{s, m_s\}, \{m_1\},\{s, m_1, m_s\}\}$.
For the gate $G_2$, a AND gate, the labels are obtained as $S_{g2} = \{m_s\} \otimes \{p_1\} = \{\phi, \{m_s\}, \{p_1\},\{m_s, p_1\}\}$. Similarly, $T_{g2} = \{m_s\} \otimes \{p_1\} = \{\phi, \{m_s\}, \{p_1\},\{m_s, p_1\}\}$.
Similarly, the labelling of $G_{3}$ are obtained using the rules in Table \ref{tab:rules} for the static (non-glitchy) and transient (glitchy) scenarios as shown in Figure \ref{fig:label_propagation}e and Figure \ref{fig:label_propagation}f respectively. The blue colored labels in Figure \ref{fig:label_propagation}f indicate labels that are in common with the labels in Figure \ref{fig:label_propagation}e. The red colored labels in Figure \ref{fig:label_propagation}f are the labels that are generated specifically for the transient case. Given such a labeling, REBECCA checks each gate's output for correlations with secrets. If they do exist, REBECCA further checks if they contain a correlation to a mask unrelated to the secret. If such a mask exists in the correlation set then the gate is secure otherwise it is not secure. In this example, for the static case there are no such labels with only a secret (or a secret with public values). But for the transient case, the design has two labels (underlined) : $\{s\}$ and $\{s, p_1\}$ at the output of $G_3$ indicating that the design is secure in the static probing model and insecure in the transient probing model, with a leak at $G_3$.

\begin{figure*}[]
\centering
\begin{minipage}[t]{0.45\textwidth}
\begin{tikzpicture}[node distance=1cm, font=\sffamily, scale=0.6, transform shape]
    % Nodes
    \node (input1) {};
    \node (input2) [below of=input1, yshift=-0.5cm] {};
    \node (input3) [below of=input2, yshift=-0.2cm] {};
    \node (input4) [below of=input3, yshift=-0.5cm] {};
    \node (xor1) [right of=input1, xshift=1.5cm, yshift=-0.5cm, draw, xor gate US, logic gate inputs=nn, anchor=input 1] {XOR};
    \node (and) [right of=input3, xshift=1.5cm, draw, and gate US, logic gate inputs=nn, anchor=input 1] {AND};
    \node (xor2) [right of=and, xshift=0.7cm, yshift=1.4cm, draw, xor gate US, logic gate inputs=nn, anchor=input 1] {XOR};
    \node (output) [right of=xor2, xshift=2cm] {};
   
    % Arrows
    \draw[->] (input1.east) -- ++(0.3,0) |- (xor1.input 1) node[pos=0.75, above] {${S_{m}}$};
    \draw[->] (input2.east) -- ++(0.3,0) |- (xor1.input 2) node[pos=0.75, below] {${m_{1}}$};
    \draw[->] (xor1.output) -- ++(0.3,0) |- (xor2.input 1);
    \draw[->] (input3.east) -- ++(0.3,0) |- (and.input 1) node[pos=0.75, above] {${m_{s}}$};
    \draw[->] (input4.east) -- ++(0.3,0) |- (and.input 2) node[pos=0.75, below] {${p_{1}}$};
    \draw[->] (and.output) -- ++(0.3,0) |- (xor2.input 2);
    \draw[->] (xor2.output) -- (output.west) node[pos=0.35, below] {$q$};
    
    % Labels
    \node [below of=xor1, yshift=-0.5cm] {${G_1}$};
    \node [below of=and, yshift=-0.5cm] {${G_2}$};
    \node [below of=xor2, yshift=-0.5cm] {${G_3}$};
    
    % Caption
    \node [below of=and, yshift=-1cm] {(a) Circuit to be verified};
\end{tikzpicture}
\end{minipage}
\hfill
\begin{minipage}[t]{0.45\textwidth}
\begin{tikzpicture}[node distance=1cm, font=\sffamily,scale=0.6, transform shape]
    % Nodes
    \node (input1) [draw, rectangle, text width=1.5cm, align=center] {{$\{ s, m_s \}$}};
    \node (input2) [below of=input1, yshift=-0.5cm, draw, rectangle, text width=1.5cm, align=center] {{$\{ m_1 \}$}};
  
    \node (xor1) [right of=input1, xshift=1.5cm, yshift=-0.5cm, draw, xor gate US, logic gate inputs=nn, anchor=input 1] {XOR};
    
    \node (wire2)[right of=xor1, xshift=1.4cm, draw, rectangle, text width=2.6cm, align=center] {{$\{\{ s,  m_s, m_1\}\}$}};
     
    % Arrows
    \draw[->] (input1.east) -- ++(0.3,0) |- (xor1.input 1)node[pos=0.75, above] {${S_{m}}$};
    \draw[->] (input2.east) -- ++(0.3,0) |- (xor1.input 2)node[pos=0.75, below] {${m_{1}}$};
     \draw[->] (xor1.east) -- ++(0.2,0) |- (wire2);
    % \draw[->] (xor1.output) -- ++(0.3,0) |- (xor2.input 1);

    % Labels
    % \node [above of=input1, yshift=-0.3cm] {${S_m}$};
    \node [below of=xor1] {${G_1}$};
    
    % \node [right of=output, xshift=1cm] {${q}$};
    
    % Gate label
    \node [below of=xor1, yshift=-1.1cm] {(b) Stable Propagation of labels for the first XOR gate};
\end{tikzpicture}
\end{minipage}
~
\begin{minipage}[t]{0.45\textwidth}
\begin{tikzpicture}[node distance=1cm, font=\sffamily,scale=0.6, transform shape]
    % Nodes
    
    \node (input3) [below of=input2, yshift=-0.2cm, draw, rectangle, text width=1.5cm, align=center] {{$\{ m_s \}$}};
    \node (input4) [below of=input3, yshift=-0.5cm, draw, rectangle, text width=1.5cm, align=center] {{$\{ p_1 \}$}};
  
    \node (and) [right of=input3, xshift=1.5cm, draw, and gate US, logic gate inputs=nn, anchor=input 1] {AND};

    \node (wire2)[right of=and, xshift=2.4cm,  draw, rectangle, text width=3.8cm, align=center] {{$\{ \phi,  \{m_s\}, \{ p_1\}, \{ m_s, p_1\}\}$}};
    % Arrows
    
    \draw[->] (input3.east) -- ++(0.3,0) |- (and.input 1)node[pos=0.75, above] {${m_{s}}$};
    \draw[->] (input4.east) -- ++(0.3,0) |- (and.input 2)node[pos=0.75, below] {${p_{1}}$};
    \draw[->] (and.east) -- ++(0.2,0) |- (wire2);

    % Labels
    % \node [above of=input1, yshift=-0.3cm] {${S_m}$};

    \node [below of=and] {${G_2}$};

    % \node [right of=output, xshift=1cm] {${q}$};
    
    % Gate label
    \node [below of=and, yshift=-1cm] {(c) Stable and Transient Propagation of Labels for the AND gate};
\end{tikzpicture}
\end{minipage}
\hfill
\begin{minipage}[t]{0.45\textwidth}
\begin{tikzpicture}[node distance=1cm, font=\sffamily,scale=0.6, transform shape]
    % Nodes
   
    \node (xor2) [right of=and, xshift=0.7cm, yshift=1.4cm, draw, xor gate US, logic gate inputs=nn, anchor=input 1] {XOR};
    \node (output) [right of=xor2, xshift=2cm, draw, rectangle, text width=3.2cm, align=center] {{$\{ \{s, m_s, m_1\}, \{s, m_1\}$,\\ $\{s, m_s, m_1, p_1\}, $\\$\{s, m_1, p_1\} \}$}};
    \node (wire2)[right of=xor1,  xshift=-1.7cm, draw, rectangle, text width=2.6cm, align=center] {{$\{\{ s,  m_s, m_1\}\}$}};
    \node (wire2)[right of=and, xshift=-2.4cm, draw, rectangle, text width=3.8cm, align=center] {{$\{ \phi,  \{m_s\}, \{ p_1\}, \{ m_s, p_1\}\}$}};
    % Arrows
   
    \draw[->] (xor1.output) -- ++(0.3,0) |- (xor2.input 1);
    
    \draw[->] (and.output) -- ++(0.3,0) |- (xor2.input 2);
    \draw[->] (xor2.output) -- (output.west)node[pos=0.35, below] {$q$};
    
    % Labels
    % \node [above of=input1, yshift=-0.3cm] {${S_m}$};

    \node [below of=xor2] {${G_3}$};
    % \node [right of=output, xshift=1cm] {${q}$};
    
    % Gate label
    \node [below of=xor2, yshift=-1cm] {(d) Stable Propagation of Labels for the Second XOR gate};
\end{tikzpicture}
\end{minipage}
~
\begin{minipage}[t]{0.45\textwidth}
\begin{tikzpicture}[node distance=1cm, font=\sffamily,scale=0.6, transform shape]
    % Nodes
    \node (input1) [draw, rectangle, text width=1.5cm, align=center] {{$\{ s, m_s \}$}};
    \node (input2) [below of=input1, yshift=-0.5cm, draw, rectangle, text width=1.5cm, align=center] {{$\{ m_1 \}$}};
    \node (input3) [below of=input2, yshift=-0.2cm, draw, rectangle, text width=1.5cm, align=center] {{$\{ m_s \}$}};
    \node (input4) [below of=input3, yshift=-0.5cm, draw, rectangle, text width=1.5cm, align=center] {{$\{ p_1 \}$}};
    \node (xor1) [right of=input1, xshift=1.5cm, yshift=-0.5cm, draw, xor gate US, logic gate inputs=nn, anchor=input 1] {XOR};
    \node (and) [right of=input3, xshift=1.5cm, draw, and gate US, logic gate inputs=nn, anchor=input 1] {AND};
    \node (xor2) [right of=and, xshift=0.7cm, yshift=1.4cm, draw, xor gate US, logic gate inputs=nn, anchor=input 1] {XOR};
    \node (output) [right of=xor2, xshift=2cm, draw, rectangle, text width=3.2cm, align=center] {{$\{ \{s, m_s, m_1\}, \{s, m_1\}$,\\ $\{s, m_s, m_1, p_1\}, $\\$\{s, m_1, p_1\} \}$}};
    \node (wire2)[right of=xor1, xshift=1.4cm, yshift=0.5cm, draw, rectangle, text width=2.6cm, align=center] {{$\{\{ s,  m_s, m_1\}\}$}};
     \node (wire2)[right of=and, xshift=1.4cm, yshift=-0.8cm, draw, rectangle, text width=3.8cm, align=center] {{$\{ \phi,  \{m_s\}, \{ p_1\}, \{ m_s, p_1\}\}$}};
    % Arrows
    \draw[->] (input1.east) -- ++(0.3,0) |- (xor1.input 1)node[pos=0.75, above] {${S_{m}}$};
    \draw[->] (input2.east) -- ++(0.3,0) |- (xor1.input 2)node[pos=0.75, below] {${m_{1}}$};
    \draw[->] (xor1.output) -- ++(0.3,0) |- (xor2.input 1);
    \draw[->] (input3.east) -- ++(0.3,0) |- (and.input 1)node[pos=0.75, above] {${m_{s}}$};
    \draw[->] (input4.east) -- ++(0.3,0) |- (and.input 2)node[pos=0.75, below] {${p_{1}}$};
    \draw[->] (and.output) -- ++(0.3,0) |- (xor2.input 2);
    \draw[->] (xor2.output) -- (output.west)node[pos=0.35, below] {$q$};
    
    % Labels
    % \node [above of=input1, yshift=-0.3cm] {${S_m}$};
    \node [below of=xor1] {${G_1}$};
    \node [below of=and] {${G_2}$};
    \node [below of=xor2] {${G_3}$};
    % \node [right of=output, xshift=1cm] {${q}$};
    
    % Gate label
    \node [below of=and, yshift=-1cm] {(e) Using Stable Propagation of Labels};
\end{tikzpicture}
\end{minipage}
\hfill
\begin{minipage}[t]{0.45\textwidth}
\begin{tikzpicture}[node distance=1cm, font=\sffamily,scale=0.6, transform shape]
    % Nodes
    \node (input1) [draw, rectangle, text width=1.5cm, align=center] {\textcolor{blue}{$\{ s, m_s \}$}};
    \node (input2) [below of=input1, yshift=-0.5cm, draw, rectangle, text width=1.5cm, align=center] {\textcolor{blue}{$\{ m_1 \}$}};
    \node (input3) [below of=input2, yshift=-0.2cm, draw, rectangle, text width=1.5cm, align=center] {\textcolor{blue}{$\{ m_s \}$}};
    \node (input4) [below of=input3, yshift=-0.5cm, draw, rectangle, text width=1.5cm, align=center] {\textcolor{blue}{$\{ p_1 \}$}};
    \node (xor1) [right of=input1, xshift=1.5cm, yshift=-0.5cm, draw, xor gate US, logic gate inputs=nn, anchor=input 1] {XOR};
    \node (and) [right of=input3, xshift=1.5cm, draw, and gate US, logic gate inputs=nn, anchor=input 1] {AND};
    \node (xor2) [right of=and, xshift=1.7cm, yshift=1.4cm, draw, xor gate US, logic gate inputs=nn, anchor=input 1] {XOR};
    \node (output) [right of=xor2, xshift=2.3cm, draw, rectangle, text width=4.2cm, align=center] {{$\{\textcolor{red}{\{s, m_{s}\}, \textcolor{blue}{\{m_1\},}}$\\$\textcolor{blue}{ \{s, m_s, m_1\}, \{s, m_1\}}$,\\$\textcolor{red}{\{m_s\}, \{ p_1\}, \{ m_s, p_1\}}$\\ $\textcolor{blue}{\{s, m_s, m_1, p_1\}}, \underline{\textcolor{red}{\{s\}}},$\\$\textcolor{red}{\{m_1\}},\textcolor{red}{\{s, m_s, p_1\}, \{ m_1, p_1\},}$\\$\underline{\textcolor{red}{\{s, p_1\}}}, \textcolor{blue}{\{s, m_1, p_1\}} \}$}};
    \node (wire1)[right of=xor1, xshift=1.4cm, yshift=0.7cm, draw, rectangle, text width=2.6cm, align=center] {{$\{\textcolor{red}{\{s, m_{s}\}, \{m_1\},}$\\ $\textcolor{blue}{\{ s,  m_s, m_1\}}\}$}};
    \node (wire2)[right of=and, xshift=1.4cm, yshift=-0.8cm, draw, rectangle, text width=3.8cm, align=center] {\textcolor{blue}{$\{ \phi,  \{m_s\}, \{ p_1\}, \{ m_s, p_1\}\}$}};
    % Arrows
    \draw[->] (input1.east) -- ++(0.3,0) |- (xor1.input 1)node[pos=0.75, above] {${S_{m}}$};
    \draw[->] (input2.east) -- ++(0.3,0) |- (xor1.input 2)node[pos=0.75, below] {${m_{1}}$};
    \draw[->] (xor1.output) -- ++(0.3,0) |- (xor2.input 1);
    \draw[->] (input3.east) -- ++(0.3,0) |- (and.input 1)node[pos=0.75, above] {${m_{s}}$};
    \draw[->] (input4.east) -- ++(0.3,0) |- (and.input 2)node[pos=0.75, below] {${p_{1}}$};
    \draw[->] (and.output) -- ++(0.3,0) |- (xor2.input 2);
    \draw[->] (xor2.output) -- (output.west)node[pos=0.35, below] {$q$};
    
    % Labels
    % \node [above of=input1, yshift=-0.3cm] {${S_m}$};
    \node [below of=xor1] {${G_1}$};
    \node [below of=and] {${G_2}$};
    \node [below of=xor2] {${G_3}$};
    % \node [right of=output, xshift=1cm] {${q}$};
    
    % Gate label
    \node [below of=and, yshift=-1cm] {(f) Using Transient Propagation of Labels};
\end{tikzpicture}
\end{minipage}
\caption{Label propagation through logic gates in the verification circuit.}
\label{fig:label_propagation}
\end{figure*}
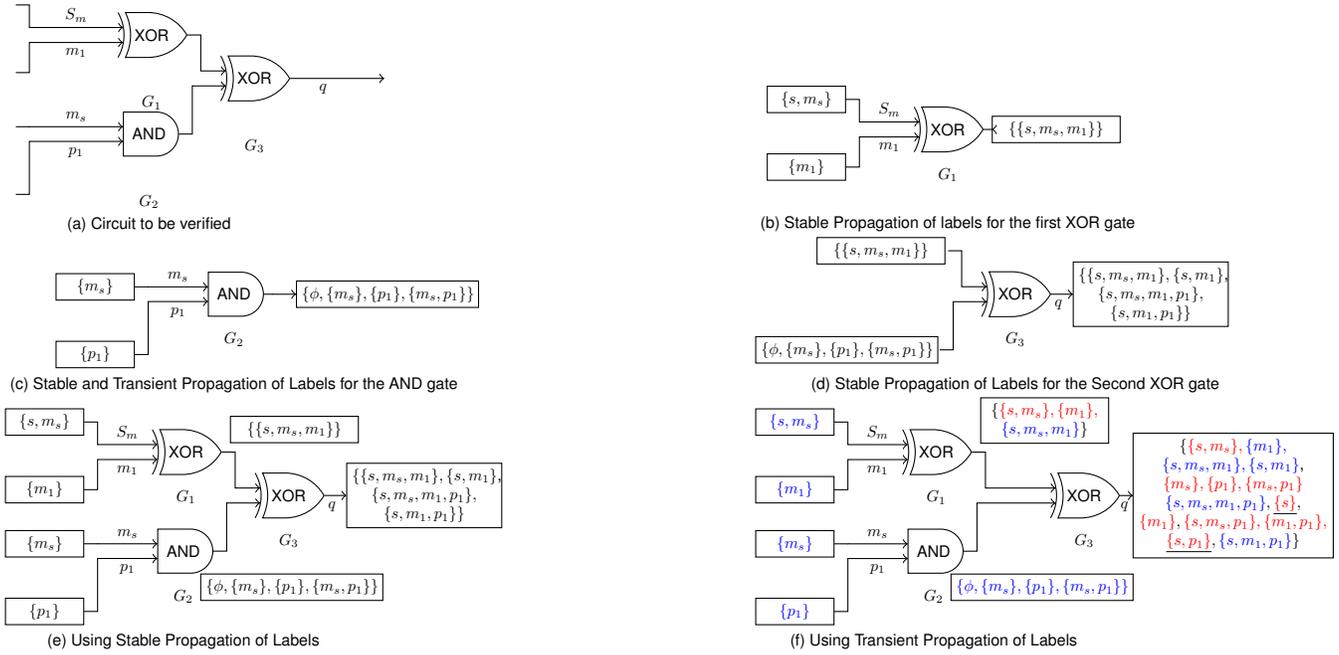

\begin{table}[H]
\centering
\caption{Rules for gates in stable and transient states}
\label{tab:rules}
\renewcommand{\arraystretch}{1.2}

\begin{adjustbox}{max width=\columnwidth}
\begin{tabular}{l c c c}
\toprule
 & \textbf{Gate} & \textbf{Stable} & \textbf{Transient} \\
\midrule
Wire &
\begin{tikzpicture}[scale=0.7]
    \draw (0,0) node[left] {$a$} -- (1,0) node[right] {$a$};
\end{tikzpicture}
& $\{a\}$ & $\{a\}$ \\\\
\midrule
NOT &
\begin{tikzpicture}[circuit logic US, scale=0.6]
    \node[not gate, draw, logic gate inputs=none] (NOT1) at (0,0) {};
    \draw (NOT1.input) -- ++(-0.4,0) node[left] {$a$};
    \draw (NOT1.output) -- ++(0.4,0) node[right] {$a^\prime$};
\end{tikzpicture}
& $\{a\}$ & $\{a\}$ \\\\
\midrule
AND / NAND &
\begin{tikzpicture}[circuit logic US, scale=0.8]
    \node[and gate, draw, logic gate inputs=nn] (AND1) at (0,0) {};
    \draw (AND1.input 1) -- ++(-0.4,0) node[left] {$a$};
    \draw (AND1.input 2) -- ++(-0.4,0) node[left] {$b$};
    \draw (AND1.output) -- ++(0.4,0) node[right] {$c$};
\end{tikzpicture}
& $S_a \otimes S_b$ & $T_a \otimes T_b$ \\\\
\midrule
Register &
\begin{tikzpicture}[scale=0.6]
    \draw (0,0) node[left] {$a$} -- (0.4,0);
    \draw (0.4,0.25) rectangle (1,-0.25);
    \draw (1,0) -- (1.4,0) node[right] {$a$};
\end{tikzpicture}
& $\{a\}$ & $\{a\}$ \\\\
\midrule
XOR / XNOR &
\begin{tikzpicture}[circuit logic US, scale=0.8]
    \node[xor gate, draw, logic gate inputs=nn] (XOR1) at (0,0) {};
    \draw (XOR1.input 1) -- ++(-0.4,0) node[left] {$a$};
    \draw (XOR1.input 2) -- ++(-0.4,0) node[left] {$b$};
    \draw (XOR1.output) -- ++(0.4,0) node[right] {$c$};
\end{tikzpicture}
& $S_a \oplus S_b$ & $T_a \otimes T_b$ \\\\
\midrule
MUX &
\begin{tikzpicture}[scale=0.5]
    \draw (0,1) -- (1,0.8) -- (1,-0.8) -- (0,-1) -- cycle;
    \draw (-0.8,0.5) node[left] {$a$} -- (0,0.5);
    \draw (-0.8,-0.5) node[left] {$b$} -- (0,-0.5);
    \draw (1,0) -- (1.8,0) node[right] {$y$};
    \draw (0.5,-1) -- (0.5,-1.2) -- (-0.8,-1.2) node[left] {$c$};
    \node at (0.5,0) {\tiny{MUX}};
\end{tikzpicture}
& $S_c \otimes (S_a \cup S_b)$ & $T_a \otimes T_b \otimes T_c$ \\\\
\bottomrule
\end{tabular}
\end{adjustbox}
\end{table}
\begin{table}[H]
\centering
\caption{Input labeling for the design in the example shown in Fig.~\ref{fig:label_propagation}}
\label{tab:example}
\renewcommand{\arraystretch}{1.2}

\begin{adjustbox}{max width=\columnwidth}
\begin{tabular}{c l l l}
\toprule
\textbf{No.} & \textbf{Inputs} & \textbf{Label} & \textbf{Correlation Set} \\
\midrule
1 & $s_m$ & Share 1 & $\{s_1, m_2\}$ \\
2 & $m_s$ & Share 1 & $\{m_2\}$ \\
3 & $m_1$ & Mask & $\{m_1\}$ \\
4 & $p_1$ & Unimportant (public value) & $\{p_1\}$ \\
5 & $q$ & Unimportant (public value) & $\{q\}$ \\
% 6 & XOR ($G_1$) & -- & -- \\
% 7 & AND ($G_2$) & -- & -- \\
% 8 & XOR ($G_3$) & -- & -- \\
\bottomrule
\end{tabular}
\end{adjustbox}
\end{table}
\subsubsection{Complexity of REBECCA}
The label-propagation steps takes linear time with the number of gates in the input, as each gate is analyzed once. The detection of masking violations is done via a set of SAT queries whose size grows polynomially with the number of gates and wires in the design. Thus, the computational complexity of REBECCA is bound by the SAT-based verification step. 

\subsection{Using High-Level Synthesis to generate Masked Hardware}
\label{subsec:MaskedHLS}
HLS has been used to generated masked hardware automatically from masked software due to the shortened design time \cite{shortestpath2021}, \cite{hlsarx}, \cite{maskedhls}. HLS obtains a (often optimized) hardware design at RTL from a software specification via the following steps: scheduling, resource-allocation and binding, datapath and controller generation. For any arbitrary masking scheme, it is non-trivial to obtain a hardware masked design from a software masked design. However, for gadget-based masked designs, the process becomes simpler. To obtain a gadget-based masked hardware from a gadget-based masked software of the same algorithm via HLS, the following steps are performed: 
\begin{itemize}
    \item For each gadget in the gadget-based masked software, identify the locations where registers need to be inserted according to the masking scheme for glitch-robust masking. 
    \item Annotate these locations and all parallel paths such that the HLS tool understands that these computations results are to be stored instead of wires in the output RTL.
    \item Pass this annotated software specification as input to the HLS tool to obtain the corresponding masked RTL design.
\end{itemize}

% \section{Impact of HLS}
\section{Impact of HLS on masking}
\label{sec:impactofhls}
Although HLS seems convenient for obtaining masked hardware, it comes with certain challenges. Given a C/C++ code, HLS can output an area/latency-optimized RTL code after performing different optimizations. However, it has been observed that these optimizations impact the PSCA security of a masked design \cite{shortestpath2021, maskedhls}. Below are a few examples illustrating this observation.

\subsection{HLS front-end} The front-end of HLS is the C compilation stage. It translates the C/C++ code into an intermediate representation (IR) using a compiler like GCC or LLVM. The optimizations done in this phase are expression simplification, code motion, re-association, etc. We present a few examples of such optimizations below and also show how they impact the side-channel security of the IR.
\lstset{
  language=C,
  basicstyle=\ttfamily\scriptsize,
  keywordstyle=\color{blue},
  commentstyle=\color{black},
  numbers=left,
  numberstyle=\tiny,
  stepnumber=1,
  frame=single,
  breaklines=true,
  tabsize=2,
  numbersep=6pt,
  xleftmargin=2em,
  framexleftmargin=2em
}
 \begin{lstlisting}[ caption={DOMAND expression},captionpos=b, label={lst:expressionbalancing}] 
 #include "ap_int.h"
 ap_int<9> domand (ap_int<9> a0, ap_int<9> a1, 
 ap_int<9> b0, ap_int<9> b1, ap_int<9> z, 
 ap_int<9> *y0, ap_int<9> *y1) {
     p1 = a0 & b0;
     p2 = a0 & b1;
     p3 = a1 & b0;
     p4 = a1 & b1;
     i1 = p2 ^ z;
     i2 = p3 ^ z;
    *y0 = i1 ^ p1;
    *y1 = i2 ^ p4;
 return 0;}
\end{lstlisting}
{\it Reassociation:} The reassociation optimization is related to the LLVM compiler. During compilation, the LLVM compiler may re-associate some of the intermediate operations which results in recombination of shares, while maintaining functional correctness, leading to removal of masking security within the algorithm. This insecurity was identified in the Bambu HLS tool by Sadhukhan et.al.\cite{shortestpath2021}. We present an example C code in the Listing \ref{lst:expressionbalancing} where masking requires the XOR gates $i1$ and $i2$ after the cross-domain products $p2$ and $p3$ for \textit{re-masking}. But, on using LLVM for compilation (as Bambu does in the front end), it shifts the XOR operations and masks the products $p1$ and $p4$ instead. Therefore, $p2$ and $p3$ won't remain \textit{re-masked} and the circuit will become insecure. Specifically, $y0$ in Listing \ref{lst:expressionbalancing} is computed as $y0 = ((a0 \otimes b1) \oplus z) \oplus (a0 \otimes b0)$. Reassociation done  by LLVM changes this computation to $y0 = ((a0 \otimes b0) \oplus z) \oplus (a0 \otimes b1)$ instead. This problem is also not avoidable using pragmas.

\textit{Expression Balancing:} Similarly, Vivado HLS, in its front end GCC compilation, rearranges the operations using associative and commutative properties to construct a balanced computation-tree to reduce the delay of the resultant design. This optimization, known as expression balancing, also results in similar masking security violations. Notably, for certain data types these optimizations are enabled by default and can not be avoided.
 \subsection{HLS back-end}
 In the back end of HLS, the scheduling, resource allocation and binding and datapath and controller generation steps are performed. There are lot more optimizations are applied here by the HLS tool. For example, multiple variables are stored in the same register if their lifetimes are non-overlapping. Two operations can be performed using the same functional units if their schedule are not overlapping. Multiplexers are added to allow the shared resource architecture. There can be multicycle, pipelined and dataflow architecture generated by HLS. Therefore, the optimizations of the backend of HLS may also impact the masking security. This part is mostly not studied properly. 

Thus it is clear that the designer needs to verify the output of HLS-generated masked hardware in order to be sure that HLS-optimizations do not hamper the PSCA-security guaranteed by masking. This work aims to achieve an accurate estimation of the PSCA-security of HLS-generation masked hardware. 
% \section{State-wise Verification Flow}
%%%%%%%%%%%%%%%%%%%%%%%%%%%%%%%%%%%%%%%%%%%%%%%%%%%%%%%%%%%%%%%%%%%%%%%%%%%%%%%%
\section{MaskedHLSVerif: Verification of HLS-generated Masked Designs}
\label{sec:methodology}

Based on the previous discussion, this Section presents our verification flow for HLS-generated masked hardware. 
\subsection{Intuition for State-wise Verification Flow}
\label{subsec:intuition}
HLS tools' scheduling policy introduces a Finite-State Machine (FSM) that asserts the proper control signals as control input to the MUXes, which selects the correct set of inputs at a particular state. A state in the FSM represents the part of the datapath that will be enabled for certain clock cycle, its corresponding operations and the inputs associated. The assignment of the correct control inputs at each state, to the MUXes ensures that an invalid combination does not occur. This characteristic of FSM can be used to develop a verification strategy that will abstract out the design being used in a particular state of the design and verify it state-wise. This will ensure that the design is secure at every state and as a whole when the last state of FSM is reached. This approach will only verify the part of the design that is valid at each state and hence invalid combinations will never occur during verification. In the subsequent subsections, we present the algorithm's for this approach and its steps in detail, starting with the splitting of the original design under verification into state-wise designs for verification. 

\subsection{Splitting into state-wise designs}
\label{subsec:splitting}

Given the design, the verification flow starts with the splitting of the overall design into state-wise operations for obtaining the state-wise designs from the original design obtained via HLS. For each design $D_x$ corresponding to state $state_x$, the following can be defined
\begin{itemize}
    \item Datapath: The datapath corresponding to the operations performed at that state and the datapath of the state-wise design of the dependent instructions from the previous state.
    \item Input signals: Input signals of the datapath of the current state and the input signals of the datapath of all the previous states with dependent operations in the current state.
    \item Output signals: intermediate outputs of the present state operations. (Original output signal of the circuit for the last state)
\end{itemize}
\lstset{
  language=C,
  basicstyle=\ttfamily\scriptsize,
  keywordstyle=\color{blue},
  commentstyle=\color{black},
  numbers=left,
  numberstyle=\tiny,
  stepnumber=1,
  frame=single,
  breaklines=true,
  tabsize=2,
  numbersep=6pt,
  xleftmargin=2em,
  framexleftmargin=2em
}
\begin{lstlisting}[caption={Cascaded masked multiplier design input to HLS},captionpos=b,  label={lst:resourceallocation}] 
#include "ap_int.h"
ap_int<9> multiply (ap_int<9>a0, ap_int<9>a1) {
#pragma HLS INLINE off
return a0 & a1;
}
ap_int<9>top (ap_int<9>a0, ap_int<9>a1, 
ap_int<9>b0, ap_int<9>b1, ap_int<9>r1,
ap_int<9>d0, ap_int<9>d1, ap_int<9>r2,
ap_int<9>*y0, ap_int<9>*y1) {
#pragma HLS EXPRESSION_BALANCE off
#pragma HLS allocation instances=multiply limit=2 
function //above pragma enables resource sharing
ap_int<9> c0, c1;
c0 = (multiply(a0, b1) + r1 ) ^ multiply(a0, b0);
c1 = (multiply(a1, b0) + r1 ) ^ multiply(a1, b1);
*y0 = (multiply(c0, d1) + r2 ) ^ multiply(c0, d0);
*y1 = (multiply(c1, d0) + r2 ) ^ multiply(c1, d1);
return 0;}
 \end{lstlisting}
\begin{figure}
    \centering
    \begin{subfigure}{1\linewidth}
        \centering
        \includegraphics[width=0.8\linewidth]{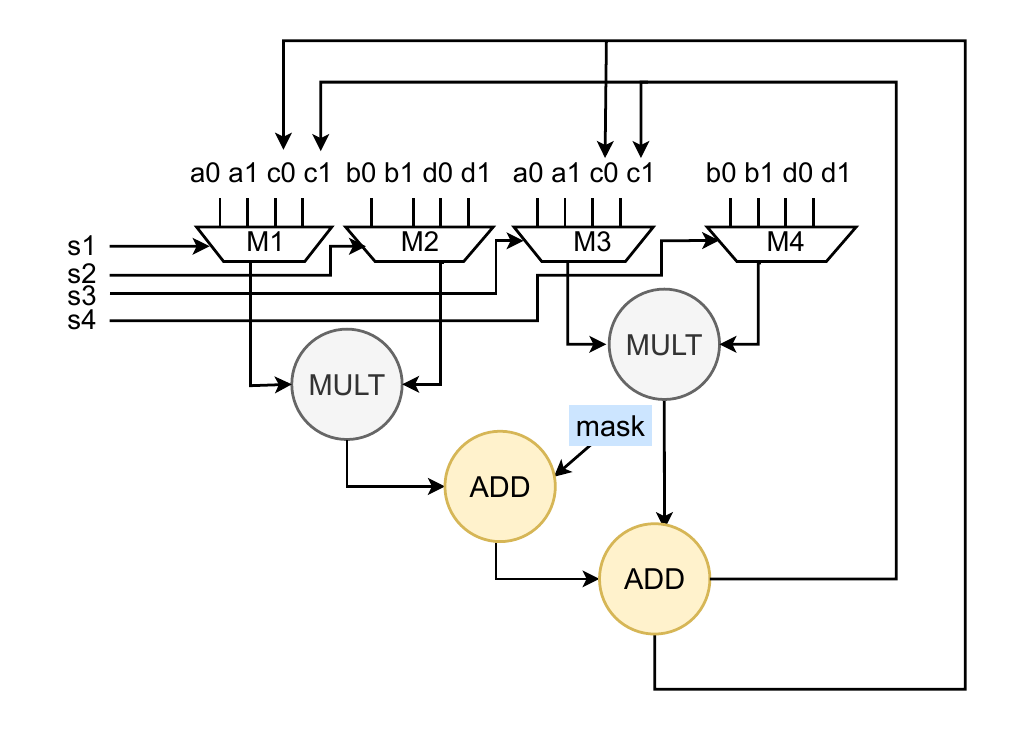}
        \caption{Datapath}
        \label{fig:intuition0-datapath}
    \end{subfigure}
    
    \vspace{0.5cm} % space between the subfigures

    \begin{subfigure}{\linewidth}
        \centering
        \includegraphics[scale=0.5]{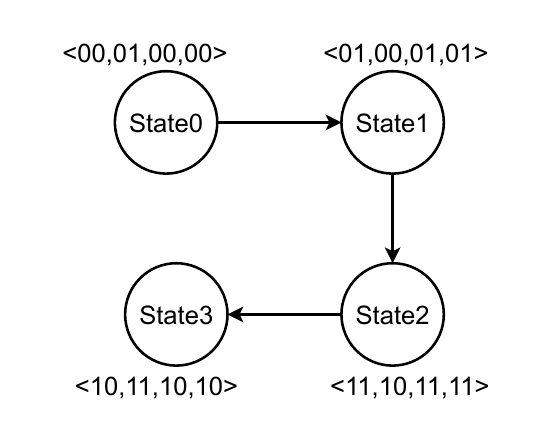}
        \caption{Controller. Here, the control signals are $<s1,s2,s3,s4>$ corresponding to the MUX select lines for MUXes M1, M2, M3, M4 in Figure \ref{fig:intuition0-datapath}. Each $si$, $i \in [1,4]$, is a two-bit value.}
       % \vspace{0.5cm}
        \label{fig:intuition0-controller}
    \end{subfigure}
    
    \caption{HLS-output controller and datapath for the C-design in Listing \ref{lst:resourceallocation} with resource constraints of two MULT per clock cycle.}
    \label{fig:intuition0}
\end{figure}
\begin{figure}
    \centering
    \includegraphics[width=1\linewidth]{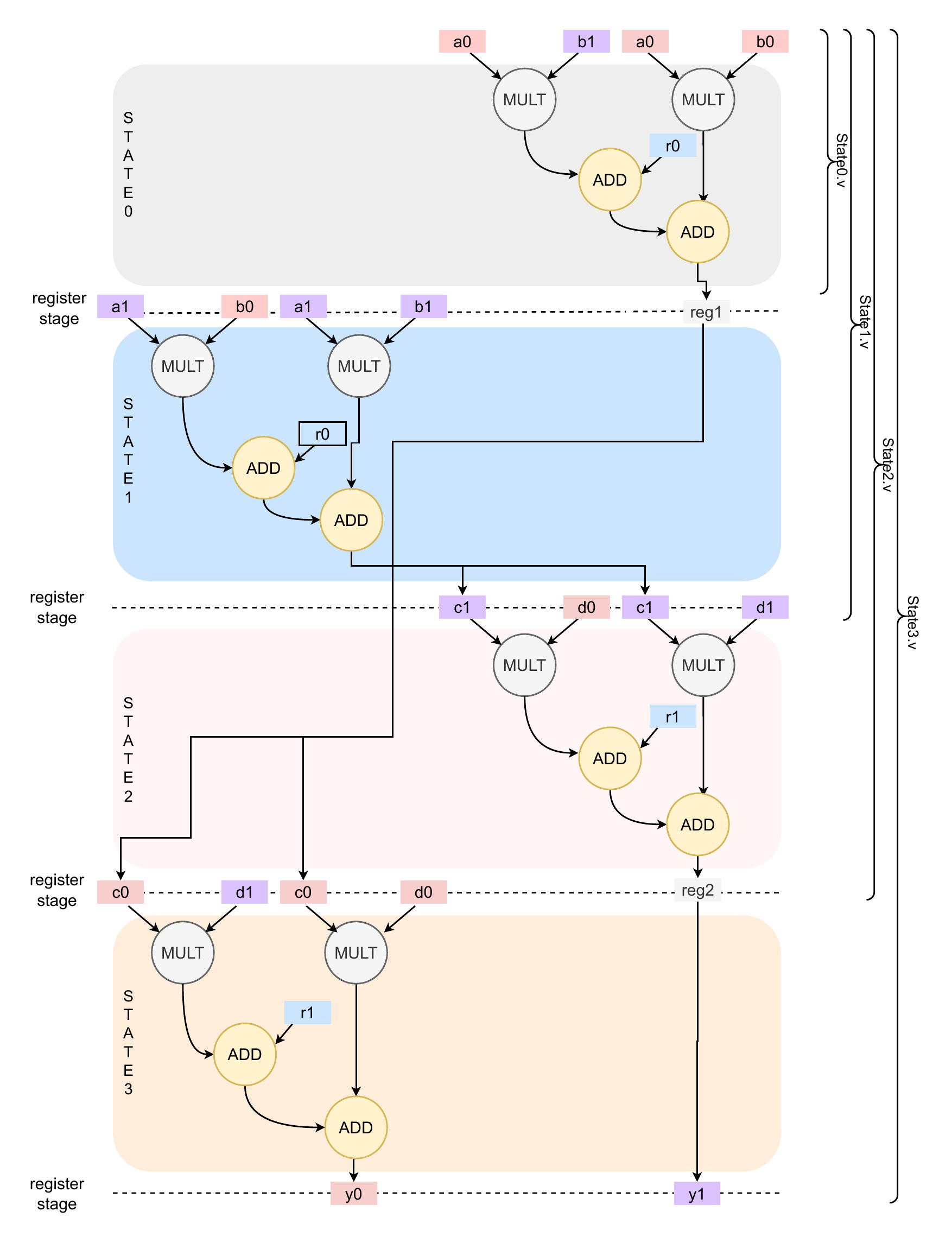}
    \caption{Intuition Example: Unrolled datapath over four-cycles of the cascaded masked multiplier of Figure \ref{fig:intuition0} corresponding to HLS-input of Listing \ref{lst:resourceallocation} (with a resource constraint of at-most two MULT (multiplier units) per cycle; depicted with operations corresponding to each state}
    \label{fig:intuition}
\end{figure}

\begin{table}[htbp] 
\centering
\caption{Registers and their corresponding labels}
\label{tab:intuitionlabelling}
\renewcommand{\arraystretch}{1.2}
\small{
\begin{tabular}{ll}
\toprule
\textbf{Variable Name} & \textbf{Label} \\
\midrule
$a0$ & $\{Share1\}$ \\
$a1$ & $\{Share1\}$ \\
$b0$ & $\{Share2\}$ \\
$b1$ & $\{Share2\}$ \\
$d0$ & $\{Share3\}$ \\
$d1$ & $\{Share3\}$ \\
$y0$ & $\{\}$ \\
$y1$ & $\{\}$ \\
$r0$ & $\{Mask\}$ \\
$r1$ & $\{Mask\}$ \\
\bottomrule
\end{tabular}}
\end{table}

To illustrate our state-wise splitting and the subsequent integration into our MaskedHLSVerif flow, consider the circuit performing two multiplication operations one after the other (out = $(a \times b)\times d)$) as shown in Figure \ref{fig:intuition}. This was obtained via unrolling the state-wise datapath of the output of Vitis HLS (consisting of four states)  for the HLS-input in Listing \ref{lst:resourceallocation}. Due to the resource-constraint of only two multipliers per cycle in the HLS input (line 11, 12 in Listing \ref{lst:resourceallocation}), the overall datapath consists of operations from the four FSM states, denoted STATE0, STATE1, STATE2 and STATE3, consisting of only two partial product (masked) multiplication operations per state.
The design has three input variables $a, b, d$ with two shares each $a0, a1, b0, b1, d0, d1$ each, indicating a first order masking. Two random values $r0, r1$ are used for re-sharing the partial products. In the datapath in Figure \ref{fig:intuition}, the first two states carries out the first masked multiplication and the final two states carries out the second masked multiplication. 

\textbf{Splitting into state-wise designs: }

For STATE0: the inputs are the primary inputs $a0, a1, b0, b1, r0$ and the output is $reg1$. The operations in the datapath would be : $temp0 = a0 \times b1; temp1 = a0 \times b0; temp2 = temp1 + r0; reg1 = temp2 + temp0;.$\footnote{The symbol $\times$ is used for a multiplicaton operation.} Thus, the datapath for $D_0$ consists of the operations in STATE0. Similarly, the datapath for $D_2$ consists of all the operations in STATE2 ($temp6 = c1 \times d0; temp7 =c1 \times d1; temp8 = temp6 + r1; reg2 = temp8 + temp7;$) and their dependent operations from the previous states ($temp3 = a1 \times b0; temp4 = a1 \times b1; temp5 = temp3 + r0; c1 = temp5 + temp4;$). For each state-wise design under verification, $\{D_0, D_1, D_2, D_3\}$, we include all dependent state operations from previous states, including the operations at that state and thus recursively obtain their datapath.

\textbf{Labelling the state-wise designs: }
The labelling of the input signals for the design are given in Table \ref{tab:intuitionlabelling}. The labels of the state-wise designs can be obtained directly form the labels of the design inputs that the state-wise design is dependent on. For example, for $D_2$, the inputs are: $d0, a1, b1, b0, r0, r1$, the labels of which are taken from Table \ref{tab:example}.

This labeling strategy results in the following advantages over a method that only includes the specific state operations for the design-under verification corresponding to a state $state_x$: 
\begin{itemize}
    \item The labeling strategy recursively defines the input of each state-wise design based on the original designs' inputs. Thus for each intermediate signal that is an input of the current state, the label of the corresponding overall design's input on which the intermediate signal is dependent is used as label.

         REBECCA requires the inputs of each design under verification to be labeled as a $Share$ of the $n^{th}$ input, a $Mask$ or a $Public$ variable. Splitting into state-based designs with only the specific state operations (and not all dependent computations from previous states) causes the intermediate operations of the original design to serve as the inputs of these state-wise designs. Obtaining the labels of these intermediate operations is thus necessary for verification. This, would require using the labels propagated through REBECCA to re-calculate the labels for each splitting of the original design based on the HLS output making verification complex.
    \item Compositional security: Since the valid operation for each state includes all the valid operations of the previous states that it is dependent on, therefore it ensures compositional verification by construction iff all the state-wise designs including the last state design passes verification. 

    In contrast, if the only the operations of each state would be included at a state-wise design under verification $state_x$, the compositional verification of the overall design is trivially undetermined.
    
\end{itemize}

\subsection{Verification of state-wise designs}
\label{subsec:statewiseverification}
Let the design in state $state_x$ be denoted as $D_x$. Once the $D_x$ are obtained, each of them are labeled with the same input labels as the original design $D$'s inputs (possible due to the recursive nature of the state-wise splitting proposed) creating the label files $L_x$. Thereafter, the label files $L_x$ and the state-wise designs $D_x$ are fed into REBECCA one at a time starting with $D_0$ and $L_0$ corresponding to the first state $state_0$.  For the verification given a security order:
\[
V(D_x) = 
\begin{cases}
1 & \text{if } D_x \text{ passes verification (secure) given } L_x \\
0 & \text{if } D_x \text{ fails verification (insecure) given } L_x
\end{cases}
\]
Given a state $state_x$ and state $state_{x+1}$ dependent on it, with the new state-wise verification flow presented, there could be the following possible outcomes of verification: 
\begin{itemize}
    \item Case1: $D_x$ design fails and $D_{x+1}$ design passes. In this case, the security flaw in design $D_{x+1}$ can be due to the following: 
        \begin{itemize}            
            \item An operation in $D_{x+1}$ that is independent of any operations on $D_{x}$.
        \end{itemize}
    \item Case2: $D_{x+1}$ design fails and $D_{x}$ design passes. In this case, the security flaw in design $D_{x}$ can be due to the following: 
    \begin{itemize}
        \item An operation in $D_{x}$ that does not have dependent operations in $D_{x+1}$.
       
    \end{itemize}
    \item Case3: Both $D_{x}$ and $D_{x+1}$ designs fail. In this case, the security flaw in the design $D_{x+1}$ can be due to the following:
    \begin{itemize}
        \item A dependent operation that failed in the design $D_{x}$.
        \item Two independent operations failing in designs $D_{x}$ and $D_{x+1}$.
        
    \end{itemize}
\end{itemize}

\subsection{Overall Flow}
\label{subsec:overallflow}
Figure \ref{fig:New_verification_flow} shows our proposed verification flow : MaskedHLSVerif.
\begin{figure}[htb]
    \centering
    \includegraphics[scale=0.5]{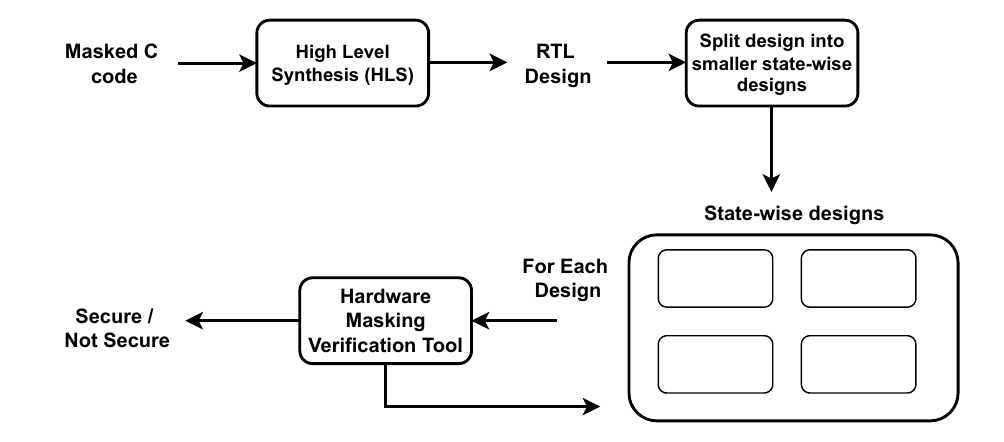}
    \caption{Proposed MaskedHLSVerif Verification flow (In this work, the Hardware Masking Verification Tool used is REBECCA \cite{rebeccabloem2018formal})}
    \label{fig:New_verification_flow}
\end{figure}
\footnotetext{In this work, the Hardware Masking Verification Tool used is REBECCA \cite{rebeccabloem2018formal}}
Below are the steps of our verification strategy :
\begin{enumerate}
    \item Generate equivalent RTL design from masked software design using HLS tool.
    \item Split the RTL design $D$ into smaller designs $D_x$ and their label files $L_x$ based on the operations being performed in each state as described previously.
    \item Starting at $D_0$ take each state-wise design $D_x$ and provide it as input to REBECCA for verification. As we are splitting the design, we also have to track every variable in the design and label them appropriately in the label file $L_x$.
    \item Repeat step 3 for every design created in step 2.
    \item If a design $D_x$ fails, we output the failure of the design according to the cases discussed above.
    \item If all the designs are secure, we conclude that the original RTL design is secure. Otherwise, it is not secure.
\end{enumerate}

% \subsection{Algorithm}
\label{subsec:algo}
A detailed algorithm for the proposed verification flow is presented Algorithm \ref{alg:verify_fsm}. 

% \textcolor{blue}{What do people write here?}
\begin{algorithm}[h]
\caption{MaskedHLSVerif: FSM-based masked RTL Verification}
\label{alg:verify_fsm}
\begin{algorithmic}[1]
\State \textbf{Input:} Masked RTL design $D$ from HLS, security order $d$
\State \textbf{Output:} Security status of $D$
\Procedure{VerifyRTL}{$D$}
    \State Generate RTL $D$ using HLS.
    \State Extract FSM states $\{S_1, S_2, \dots, S_n\}$ \Comment{e.g., ap\_CS\_fsm\_stateX}
    \For{each FSM state $S_i$}
        \State Extract operations and datapath in $S_i$ and all recursively dependent states
        \State Create sub-design $D_i$ with logic for $S_i$ and dependent states
        \State Track and label variables/signals in $D_i$
    \EndFor
    \For{each sub-design $D_i$}
        \State Run REBECCA on $D_i$ with security order $d$
        \If{REBECCA reports insecurity}
            \State \Return Insecure, $state_i$
        \EndIf
    \EndFor
    \State \Return Secure
\EndProcedure
\end{algorithmic}
\end{algorithm}
Given an input masked hardware design $D$ obtained using an HLS tool, the output of this algorithm gives whether the design is secure in the $d^{th}$ order.  If not, the locations of failure and the FSM state in which it fails are identified by our algorithm. 
% Initially, the FSM states are extracted from the HLS generated RTL. Following that, for each state the datapath, i.e., the operations corresponding to the state and the operations of the previous state that the current state is dependent on, the inputs and outputs of the state, the state-wise designs are built. Along with that the labels of all the inputs to the state, which are a subset of the primary inputs are replicated and stored. This gives two files $D_x$ corresponding to the design, and $L_x$ corresponding to the labels of $state_x$. With the files for all states available, REBECCA is invoked on them one at a time starting at the initial state. At the end of the algorithm, if all the states returned secure, then the overall design is secure. Otherwise the insecure designs and their traces from REBECCA corresponding to the locations of failure are reported. 

\subsection{Soundness of MaskedHLSVerif}

To establish the soundness of our verification flow, we start by defining the following: 
\begin{itemize}

\item \textbf{Ordering of states:} For two states $state_x$ and $state_y$ of a design $D$, if $x < y$, then operations in $state_x$ have dependent operations in $state_y$ or not, but operations in $state_y$ do not have dependent operations in $state_x$.

\item \textbf{Trace: }
An execution \textit{trace} of the design under verification $D$ is a sequence of
datapath operations executed from primary inputs to primary outputs under a single execution of the controller FSM. 

% \item \textbf{State-local trace: }
% A \textit{state-local trace} $T_x$ is the sequence of datapath operations
% executed while the controller FSM resides in a particular state $state_x$.

% For a given execution of the controller FSM, a trace $T$ can be written as
% \[
% T = T_{x_1} \circ T_{x_2} \circ \cdots \circ T_{x_k},
% \]
% where $(state_{x_1}, state_{x_2}, \ldots, state_{x_k})$ is the sequence of FSM states visited during that execution, with $state_{x_1}$ being the initial state
% and $state_{x_k}$ the final state of the design.

\end{itemize}

 We assume that the REBECCA algorithm is sound for a fixed-input datapath. To prove that MaskedHLSVerif verifies HLS-generated RTL correctly we proceed via proving in two steps:
\begin{itemize}
    \item Firstly, we prove that the state-wise split is sound.
    \item Then we prove that the state wise verification is sound.
\end{itemize}
\newtheorem{lemma}{Lemma}
\newtheorem{theorem}{Theorem}
\newtheorem{proposition}{Proposition}
\newtheorem{corollary}{Corollary}

\begin{lemma}[Soundness of State-wise Split]\label{lemma:split}

The state-wise splitting constructs state-wise designs $D_0, D_1, \ldots, D_n$ such that for any execution trace $T$ of the design $D$, every operation occurring in $T$ is captured by at least one of the state-wise designs $D_x$, $x \in \{0, n\}$.
\end{lemma}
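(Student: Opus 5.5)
The plan is a direct argument from the construction in Section~\ref{subsec:splitting}. Every operation of $D$ is assigned to exactly one FSM state by the HLS schedule, and $D_x$ contains all operations scheduled in $state_x$ together with, recursively, all operations from earlier states on which they depend. So it suffices to show that any operation $o$ in an execution trace $T$ is scheduled in some state $state_x$. Then $o \in D_x$ by construction, for that same index $x \in \{0,\dots,n\}$.

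The first step is to formalize the schedule as a map $\sigma$ from datapath operations to FSM states. An operation $o$ appears in a trace $T$ only if the controller enables it, that is, asserts the MUX select lines and register enables routing its operands. The controller asserts these control signals only in the state(s) where the scheduler placed $o$. Hence every $o \in T$ has $\sigma(o) = state_x$ for some $x$ with $state_x$ visited in $T$. Since $T$ is a single execution from primary inputs to primary outputs, the visited states are among $state_0,\dots,state_n$.

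The second step is to check that $o$, together with the operand-producing operations it needs, lands in $D_x$ with the correct inputs. Using the ordering of states, each producer $o'$ of an operand of $o$ satisfies $\sigma(o') = state_y$ with $y \le x$. By induction on $x$, with the recursive inclusion of dependent operations, $o'$ lies in $D_x$ as well. This also shows that the inputs of $D_x$ are primary inputs of $D$, which justifies the labels $L_x$. The base case is $D_0$, all of whose operands are primary inputs.

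The main obstacle is the first step. We must argue that the operation $o$ actually executed in $T$ coincides with the operation the split attributes to $state_x$, meaning that the MUX configuration recorded in $D_x$ is exactly the one the FSM selects there. I would resolve this by extracting $D_x$ from the FSM's decoded control values in $state_x$ (for example, the \texttt{ap\_CS\_fsm\_stateX} conditions), so that $D_x$ is by definition the datapath that results from fixing the control signals to their $state_x$ values. A shared unit used in several states is then captured separately in each corresponding $D_x$. If an operation spans several cycles (multicycle or pipelined), I would assign it to the state in which its result is registered; this still places it in at least one $D_x$, which is all the lemma requires.
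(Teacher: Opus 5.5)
Your proof is correct. Its core is the same fact the paper relies on: by construction, $D_x$ contains every operation the controller enables in $state_x$, together with that operation's transitive producers from earlier states. The two proofs differ in how that fact is turned into the lemma.

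The paper inducts over the state index. The base case is $D_0$, and the step says $D_{j+1}$ contains the $state_{j+1}$ operations plus their dependencies. It then concludes that the single design $D_n$ captures every operation of $T$. That conclusion is stronger than the statement, and the induction as written does not deliver it, because $D_n$ only includes what the last state's operations depend on. An earlier operation feeding a different output (for instance, the two independent chains for $y0$ and $y1$ in Listing~\ref{lst:resourceallocation}) lands in $D_n$ only if you additionally use the convention that the last state-wise design carries all original outputs of $D$.

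Your pointwise argument sends each $o \in T$, via the schedule map $\sigma$, to the $D_x$ of its own state. It proves exactly the ``at least one $D_x$'' claim and needs no such convention. You then use induction where it actually does work. It gives the dependency closure, and it shows that the inputs of each $D_x$ are primary inputs of $D$, which justifies reusing the original labels in $L_x$. The paper states this in the construction but does not argue it in the lemma.

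You flag as the main obstacle that the MUX configuration recorded in $D_x$ must be the one selected in $state_x$; the paper absorbs this into ``by construction''. Defining $D_x$ by fixing the decoded \texttt{ap\_CS\_fsm\_stateX} control values makes that correspondence explicit. It also makes clear that, in both versions, the lemma is essentially a property of how the split is defined.
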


\begin{proof}
% \textit{Proof Idea}\\
\textbf{Base Case:} 
For the initial FSM state $state_0$, by construction, the corresponding state-wise design $D_0$ contains all datapath operations enabled while the FSM is in $state_0$. 

\noindent \textbf{Inductive Case:} 
% Due to the unrolling
Assume that for any FSM state $state_j$ with $j < n$, the corresponding state-wise design $D_j$ captures all operations executed while the FSM is in $state_j$ and operations of all previous states $state_j$ is dependent on. 
Let a state $state_{j+1}$ immediately follow $state_{j}$,
%($T = T_0 \circ T_1 \circ \ldots T_{j-1} \circ T_j \circ T_{j+1} \circ \ldots \circ T_n$), 
\begin{itemize}

\item The state-wise design $D_{j+1}$ is constructed by including all datapath operations enabled in $state_{j+1}$, along with all recursively dependent operations from $state_{j}$. 
\item In addition, the inductive hypothesis states that $D_j$ captures all operations executed while the FSM is in $state_j$ and operations of all previous states $state_j$ are dependent on. Thus, $state_{j+1}$ satisfies the inductive hypothesis.
    
\end{itemize}
\noindent \textbf{Conclusion:} Thus, from the above points, follows that the design $D_n$ corresponding to the last state captures all operations occurring in a trace $T$ of one single execution of the controller FSM.
\end{proof}

\begin{lemma}[Soundness of MaskedHLSVerif]\label{lemma:soundness}   

\begin{itemize}
    \item \textbf{Assumption 1: }REBECCA is sound for a fixed-input datapath.
    \item \textbf{Assumption 2: }The splitting of the design into state-wise designs is sound.
\end{itemize}
\noindent Given Assumption 1 and Assumption 2, Algorithm 1 will result in an output $true$ if there is at least one intermediate wire that leaks and $false$ otherwise via the power side channel in the $1^{st}$ security order.
\end{lemma}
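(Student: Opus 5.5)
The plan is to prove the two directions separately: completeness (a real leak yields an ``insecure'' verdict) and soundness (an ``insecure'' verdict comes from a leak on an executed trace). Both directions go through the per-state REBECCA calls of Algorithm~\ref{alg:verify_fsm}. The statement's \emph{true} is read as ``a leaking wire is reported'', i.e.\ Algorithm~\ref{alg:verify_fsm} returns Insecure. Throughout, the security order is fixed at $1$, so a leak is a single probed wire whose static (S) or transient (T) label contains a monomial with a secret and no unrelated mask.

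\textbf{Step 1: a leak is reported.} Suppose some intermediate wire $w$ of $D$ leaks in the first order. Fix a trace $T$ on which $w$ carries a value computed from an operation $o$ of $T$.
\begin{itemize}
\item By Lemma~\ref{lemma:split} (Assumption~2), $o$ is captured by some $D_x$.
\item I would strengthen this slightly: $D_x$ contains the full fan-in cone of $o$ back to the primary inputs, because $D_x$ includes all recursively dependent operations from earlier states.
\item Since the labels $L_x$ are copied from the primary-input labels of $D$, the REBECCA label of $w$ computed in $D_x$ equals its label in the active (MUX-resolved) datapath of $D$ on $T$.
\item For the transient model, I must check that register boundaries are preserved in $D_x$, so that the glitch-extended fan-in stops at the same synchronization points as in $D$.
\end{itemize}
By Assumption~1, REBECCA on the fixed-input datapath $D_x$ flags $w$, so $V(D_x)=0$ and Algorithm~\ref{alg:verify_fsm} returns Insecure together with $state_x$.

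\textbf{Step 2: no leak means ``secure''.} Conversely, suppose Algorithm~\ref{alg:verify_fsm} returns Insecure at some $D_x$. Then REBECCA found a wire $w$ in $D_x$ with a leaking label, and I must show $w$ corresponds to a real leak of $D$.
\begin{itemize}
\item Each $D_x$ is built only from operations enabled in $state_x$ and their dependencies in earlier states, with MUX selects fixed by the FSM.
\item Hence every path in $D_x$ is realized on some actual execution trace; this is exactly the point of the construction in Section~\ref{subsec:intuition}.
\item Soundness of REBECCA on $D_x$ (Assumption~1, read as: flagged labels reflect genuine correlations up to its conservative approximation) then transfers the leak back to $D$.
\end{itemize}
If all $V(D_x)=1$, then by Step~1 no wire of $D$ leaks, which gives the ``otherwise'' clause.

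\textbf{Main obstacle.} The hardest part is the label-equivalence claim used in both steps: that the label of $w$ in $D_x$ coincides with its label in $D$ restricted to the trace. Two things make this delicate. First, HLS register sharing can place several variables with non-overlapping lifetimes in the same register, so I must argue that the split resolves each register read to the unique producer live in that state. I would try an induction over the state ordering defined above, showing each register input in $D_x$ is the producer selected by the FSM. Second, the conservative over-approximation of REBECCA means the reverse direction (Step~2) cannot be fully exact. I would therefore state explicitly that ``leaks'' means ``leaks according to REBECCA's labeling on the executed datapath'', and note that false positives arising from REBECCA's own approximation are not excluded; only those due to unreachable MUX combinations are.
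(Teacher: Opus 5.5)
Your proposal is correct and follows essentially the paper's route. The paper also argues two inclusions: every operation on an execution trace of $D$ is captured by some $D_x$ (Lemma~\ref{lemma:split}), and every computation in a $D_x$ is a genuine computation of $D$. It then lets REBECCA's soundness on the fixed-input sub-designs carry the verdict. Your additional points are details the paper leaves implicit, not a different argument: label equivalence from the copied primary-input labels, preserved register boundaries for the transient model, unique producers under register sharing, and the caveat that REBECCA's own over-approximation makes the ``insecure'' direction exact only relative to its labeling.
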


\begin{proof}
%The goal is to demonstrate that Algorithm 1 accurately produces the results of the power-side-channel security evaluation of the design, using the correct set of control signals for the MUX logic for the states. This means that, if 
We need to prove that all the intermediate computation result captured by each sub-design $D_x$ belongs to $D$ and each (intended) intermediate computation result in $D$ belongs to at least one of $D_x$, we are done.
From the previous proof, it follows that for each trace $T$ of $D$, any operation in $T$ is captured by at least one state-wise design $D_x$; therefore, each valid intermediate computation of $D$ is captured by at least one of $D_x$ and vice versa. Hence, the soundness of Algorithm \ref{alg:verify_fsm} follows.
\end{proof}

% \section{Results}
%%%%%%%%%%%%%%%%%%%%%%%%%%%%%%%%%%%%%%%%%%%%%%%%%%%%%%%%%%%%%%%%%%%%%%%%%%%%%%%%
\section{Results}
\label{sec:results}
In this section we present the results of using our proposed flow to verify the benchmark designs generated by Vitis HLS tool version 2022.2. We have used six benchmarks: DOM cascaded version4, COMAR cascaded, HPC1 cascaded, HPC2 cascaded, PRESENT DOM and, PRESENT HPC1, which were briefly introduced in Table \ref{tab:motivation}, which we describe in the next Subsection.
\subsection{Benchmark Details}
The benchmarks used to evaluate our verification flow consists of four cascaded masked multipliers, performing two multiplication operations one after another, using four different masking schemes: DOM \cite{dom2016}, HPC1 \cite{HPC2020}, HPC2 \cite{HPC2020} and COMAR \cite{comar}. In addition we used the PRESENT \cite{present2007} cipher's S-box with AND depth 2 and 4 AND gates from the work of Cassiers et. al. \cite{HPC2020} masked using DOM and HPC1. All these designs were converted from their C-specifications to RTL via Vitis HLS subject to a set of resource/timing constraints. We list the details of HLS synthesis of these designs using Vitis HLS below:
\begin{itemize}
    \item DOM cascaded masked multiplier: synthesized by Vitis HLS with target clock set as 1ns resulting in a 4-state design.
    \item HPC1 cascaded masked multiplier: synthesized by Vitis HLS with target clock set as 0.5ns resulting in a 4 state design.
    \item HPC2 cascaded masked multiplier: synthesized by Vitis HLS with target clock set as 0.5ns resulting in a 3-state design.
    \item COMAR cascaded masked multiplier: synthesized by Vitis HLS with target clock set as 0.5ns resulting in a 3-state design.
    \item PRESENT masked design using DOM: synthesized by Vitis HLS with target clock set as 2ns resulting in a 2-state design.
    \item PRESENT masked design using HPC1: synthesized by Vitis HLS with target clock set as 2ns resulting in a 2-state design.
\end{itemize}

They are further split into smaller designs based on the operations being performed in each state. For DOM and COMAR cascaded designs, FSM has four states. So, four state-wise have been created from the original DOM and COMAR RTL code generated from HLS using a masked software implementation of DOM. Similarly, in HPC1 and HPC2 designs, FSM has three states resulting in the generation of three smaller designs. For PRESENT DOM and PRESENT HPC designs, FSM has two states, which results in two smaller designs. On these benchmarks we perform two experiments to validate our verification flow. In Experiment 1, we aim to verify the HLS generated RTLs with our verification method. It may be recalled that REBECCA produces false-positive results for each of theses above six test scenarios. In Experiment 2, we demonstrate that our tool is able to catch a masking flaw produced by Vitis HLS in synthesizing the cascaded DOM masked multiplier. 

\begin{table*}[htbp]
\centering
\caption{Metrics for different states of DOM RTL (S: stable, T: transient)}
\label{table:metrics_DOM}
\renewcommand{\arraystretch}{1.2}

\begin{adjustbox}{max width=\textwidth}
\begin{tabular}{c cc c c c cc cc cc cc cc}
\toprule
\multirow{2}{*}{\textbf{State}} 
& \multicolumn{2}{c}{\textbf{Exec. Time (s)}} 
& \multirow{2}{*}{\textbf{LOC (C)}} 
& \multirow{2}{*}{\textbf{LOC (RTL)}} 
& \multirow{2}{*}{\textbf{LOC (JSON)}} 
& \multicolumn{2}{c}{\textbf{Vars}} 
& \multicolumn{2}{c}{\textbf{Assertions}} 
& \multicolumn{2}{c}{\textbf{Digraph}} 
& \multicolumn{2}{c}{\textbf{Expected}} 
& \multicolumn{2}{c}{\textbf{Actual}} \\
\cmidrule(lr){2-3}\cmidrule(lr){7-8}\cmidrule(lr){9-10}
\cmidrule(lr){11-12}\cmidrule(lr){13-14}\cmidrule(lr){15-16}
& S & T & & & & S & T & S & T & Nodes & Edges & S & T & S & T \\
\midrule
State 1 & 0.09 & 0.07 & 33 & 50  & 498  & 90  & 165 & 18 & 33 & 15 & 16 & True & True & True & True \\
State 2 & 0.06 & 0.08 & 33 & 135 & 792  & 114 & 209 & 22 & 41 & 19 & 22 & True & True & True & True \\
State 3 & 0.09 & 0.12 & 33 & 164 & 1152 & 174 & 319 & 32 & 61 & 29 & 38 & True & True & True & True \\
State 4 & 0.10 & 0.13 & 33 & 189 & 1362 & 186 & 341 & 34 & 65 & 31 & 42 & True & True & True & True \\
\bottomrule
\end{tabular}
\end{adjustbox}
\end{table*}

\begin{table*}[htbp]
\centering
\caption{Metrics for different states of COMAR RTL (S: stable, T: transient)}
\label{table:metrics_COMAR}
\renewcommand{\arraystretch}{1.2}

\begin{adjustbox}{max width=\textwidth}
\begin{tabular}{c cc c c c cc cc cc cc cc}
\toprule
\multirow{2}{*}{\textbf{State}} 
& \multicolumn{2}{c}{\textbf{Exec. Time (s)}} 
& \textbf{LOC (C)} & \textbf{LOC (RTL)} & \textbf{LOC (JSON)}
& \multicolumn{2}{c}{\textbf{Vars}}
& \multicolumn{2}{c}{\textbf{Assertions}}
& \multicolumn{2}{c}{\textbf{Digraph}}
& \multicolumn{2}{c}{\textbf{Expected}}
& \multicolumn{2}{c}{\textbf{Actual}} \\
\cmidrule(lr){2-3}\cmidrule(lr){7-8}\cmidrule(lr){9-10}
\cmidrule(lr){11-12}\cmidrule(lr){13-14}\cmidrule(lr){15-16}
& S & T & & & & S & T & S & T & Nodes & Edges & S & T & S & T \\
\midrule
State 1 & 0.08 & 0.10 & 90 & 81  & 690  & 168 & 315 & 24 & 45 & 21 & 20 & True & True & True & True \\
State 2 & 0.31 & 0.41 & 90 & 168 & 1905 & 649 & 1239 & 62 & 121 & 59 & 69 & True & True & True & True \\
State 3 & 0.43 & 0.53 & 90 & 207 & 2421 & 836 & 1596 & 79 & 155 & 76 & 96 & True & True & True & True \\
State 4 & 0.45 & 0.60 & 90 & 238 & 2715 & 935 & 1785 & 88 & 173 & 85 & 108 & True & True & True & True \\
\bottomrule
\end{tabular}
\end{adjustbox}
\end{table*}

\begin{table*}[htbp]
\centering
\caption{Metrics for different states of HPC1 RTL (S: stable, T: transient)}
\label{table:metrics_hpc1}
\renewcommand{\arraystretch}{1.2}

\begin{adjustbox}{max width=\textwidth}
\begin{tabular}{c cc c c c cc cc cc cc cc}
\toprule
\multirow{2}{*}{\textbf{State}} 
& \multicolumn{2}{c}{\textbf{Exec. Time (s)}} 
& \textbf{LOC (C)} & \textbf{LOC (RTL)} & \textbf{LOC (JSON)}
& \multicolumn{2}{c}{\textbf{Vars}}
& \multicolumn{2}{c}{\textbf{Assertions}}
& \multicolumn{2}{c}{\textbf{Digraph}}
& \multicolumn{2}{c}{\textbf{Expected}}
& \multicolumn{2}{c}{\textbf{Actual}} \\
\cmidrule(lr){2-3}\cmidrule(lr){7-8}\cmidrule(lr){9-10}
\cmidrule(lr){11-12}\cmidrule(lr){13-14}\cmidrule(lr){15-16}
& S & T & & & & S & T & S & T & Nodes & Edges & S & T & S & T \\
\midrule
State 1 & 0.03 & 0.03 & 48 & 47  & 240  & 28  & 49  & 10 & 17 & 7  & 6  & True & True & True & True \\
State 2 & 0.10 & 0.13 & 48 & 134 & 966  & 200 & 375 & 28 & 53 & 25 & 30 & True & True & True & True \\
State 3 & 0.16 & 0.22 & 48 & 204 & 1495 & 324 & 612 & 39 & 75 & 36 & 48 & True & True & True & True \\
\bottomrule
\end{tabular}
\end{adjustbox}
\end{table*}

\begin{table*}[htbp]
\centering
\caption{Metrics for different states of HPC2 RTL (S: stable, T: transient)}
\label{table:metrics_hpc2}
\renewcommand{\arraystretch}{1.2}

\begin{adjustbox}{max width=\textwidth}
\begin{tabular}{c cc c c c cc cc cc cc cc}
\toprule
\multirow{2}{*}{\textbf{State}} 
& \multicolumn{2}{c}{\textbf{Exec. Time (s)}} 
& \textbf{LOC (C)} & \textbf{LOC (RTL)} & \textbf{LOC (JSON)}
& \multicolumn{2}{c}{\textbf{Vars}}
& \multicolumn{2}{c}{\textbf{Assertions}}
& \multicolumn{2}{c}{\textbf{Digraph}}
& \multicolumn{2}{c}{\textbf{Expected}}
& \multicolumn{2}{c}{\textbf{Actual}} \\
\cmidrule(lr){2-3}\cmidrule(lr){7-8}\cmidrule(lr){9-10}
\cmidrule(lr){11-12}\cmidrule(lr){13-14}\cmidrule(lr){15-16}
& S & T & & & & S & T & S & T & Nodes & Edges & S & T & S & T \\
\midrule
State 1 & 0.05 & 0.05 & 82 & 90  & 577  & 78  & 143 & 16 & 29 & 13 & 12 & True & True & True & True \\
State 2 & 0.11 & 0.14 & 82 & 205 & 1498 & 210 & 390 & 33 & 63 & 30 & 38 & True & True & True & True \\
State 3 & 0.15 & 0.20 & 82 & 332 & 2364 & 280 & 520 & 43 & 83 & 40 & 58 & True & True & True & True \\
\bottomrule
\end{tabular}
\end{adjustbox}
\end{table*}

\begin{table*}[htbp]
\centering
\caption{Metrics for different states of PRESENT DOM RTL (S: stable, T: transient)}
\label{table:metrics_present_dom}
\renewcommand{\arraystretch}{1.2}

\begin{adjustbox}{max width=\textwidth}
\begin{tabular}{c cc c c c cc cc cc cc cc}
\toprule
\multirow{2}{*}{\textbf{State}} 
& \multicolumn{2}{c}{\textbf{Exec. Time (s)}} 
& \textbf{LOC (C)} & \textbf{LOC (RTL)} & \textbf{LOC (JSON)}
& \multicolumn{2}{c}{\textbf{Vars}}
& \multicolumn{2}{c}{\textbf{Assertions}}
& \multicolumn{2}{c}{\textbf{Digraph}}
& \multicolumn{2}{c}{\textbf{Expected}}
& \multicolumn{2}{c}{\textbf{Actual}} \\
\cmidrule(lr){2-3}\cmidrule(lr){7-8}\cmidrule(lr){9-10}
\cmidrule(lr){11-12}\cmidrule(lr){13-14}\cmidrule(lr){15-16}
& S & T & & & & S & T & S & T & Nodes & Edges & S & T & S & T \\
\midrule
State 1 & 0.42 & 0.88 & 82 & 342 & 3438 & 870 & 1653 & 90 & 177 & 87 & 148 & True & True & True & True \\
State 2 & 0.45 & 0.97 & 82 & 408 & 3942 & 950 & 1805 & 98 & 193 & 95 & 156 & True & True & True & True \\
\bottomrule
\end{tabular}
\end{adjustbox}
\end{table*}

\begin{table*}[htbp]
\centering
\caption{Metrics for different states of PRESENT HPC1 RTL (S: stable, T: transient)}
\label{table:metrics_present_hpc1}
\renewcommand{\arraystretch}{1.2}

\begin{adjustbox}{max width=\textwidth}
\begin{tabular}{c cc c c c cc cc cc cc cc}
\toprule
\multirow{2}{*}{\textbf{State}} 
& \multicolumn{2}{c}{\textbf{Exec. Time (s)}} 
& \textbf{LOC (C)} & \textbf{LOC (RTL)} & \textbf{LOC (JSON)}
& \multicolumn{2}{c}{\textbf{Vars}}
& \multicolumn{2}{c}{\textbf{Assertions}}
& \multicolumn{2}{c}{\textbf{Digraph}}
& \multicolumn{2}{c}{\textbf{Expected}}
& \multicolumn{2}{c}{\textbf{Actual}} \\
\cmidrule(lr){2-3}\cmidrule(lr){7-8}\cmidrule(lr){9-10}
\cmidrule(lr){11-12}\cmidrule(lr){13-14}\cmidrule(lr){15-16}
& S & T & & & & S & T & S & T & Nodes & Edges & S & T & S & T \\
\midrule
State 1 & 0.52 & 1.17 & 80 & 515 & 3788 & 1056 & 2016 & 99 & 195 & 96 & 164 & True & True & True & True \\
State 2 & 0.57 & 1.31 & 80 & 578 & 4293 & 1144 & 2184 & 107 & 211 & 104 & 172 & True & True & True & True \\
\bottomrule
\end{tabular}
\end{adjustbox}
\end{table*}

\subsection{Experiment 1}
%: Using proposed verification flow of REBECCA on benchmarks generated via Vivado HLS}
The first experiment uses the proposed verification flow of REBECCA on benchmarks generated via Vitis HLS. For the cascaded DOM multiplier design, the results corresponding to each state have been shown in the Table \ref{table:metrics_DOM} across various metrics like the number of variables being used, number of assertions being created, lines of code in c, RTL as well as in JSON which represents gate-level design for the hardware implementation and is created in intermediate steps of verification by REBECCA.
We have also listed the number of nodes and edges in the digraph that is being generated to verify the design in REBECCA. this graph is created by removing redundant circuit components and only tracks the paths where sensitive data will be propagated. Similar results across various metrics for all states of cascaded COMAR, cascaded HPC1, cascaded HPC2, PRESENT DOM-masked, and PRESENT HPC1-masked have been shown in the Tables \ref{table:metrics_COMAR}, \ref{table:metrics_hpc1}, \ref{table:metrics_hpc2}, \ref{table:metrics_present_dom} and \ref{table:metrics_present_hpc1}, respectively. 

As we can observe from the results shown in the tables \ref{table:metrics_DOM}, \ref{table:metrics_COMAR}, \ref{table:metrics_hpc1}, \ref{table:metrics_hpc2}, \ref{table:metrics_present_dom} and \ref{table:metrics_present_hpc1} for the designs DOM, COMAR, HPC1, HPC2, PRESENT with DOM and PRESENT with HPC1 respectively, with each state, the number of variables and assertions increase.
This, in turn, increases the complexity of verifying the state-wise designs.
%A consolidated set of results for all the six benchmarks is presented in Table \ref{table:results_new} which shows the aggregate execution times, clock period, number of states, number of variables and number of assertions for all the designs. 
We can also observe the inverse relationship between the clock period and the number of states generated in FSM.
Consider the HLS-generated RTL and the result of verifying them using REBECCA and our tool flow, MaskedHLSVerif in Table \ref{tab:resultsexpt2}. 
The results shows that MaskedHLSVerif can formally verify the HLS-generated RTL using Vitis HLS which registered false positives on REBECCA.
\begin{table}[htbp]
\centering
\caption{Comparison of PSCA Security Verification Results: Rebecca vs. New Verification Flow}
\label{tab:resultsexpt2}
\renewcommand{\arraystretch}{1.2}

\begin{adjustbox}{max width=\columnwidth}
\begin{tabular}{l cc cc cc cc}
\toprule
\multirow{2}{*}{\textbf{Example}} 
& \multirow{2}{*}{\textbf{FSM}} 
& \multirow{2}{*}{\textbf{Reg}} 
& \multicolumn{2}{c}{\textbf{Expected}} 
& \multicolumn{2}{c}{\textbf{Rebecca}} 
& \multicolumn{2}{c}{\textbf{MaskedHLSVerif}} \\
\cmidrule(lr){4-5}
\cmidrule(lr){6-7}
\cmidrule(lr){8-9}
& & & $s$ & $t$ & $s$ & $t$ & $s$ & $t$ \\
\midrule
DOM RTL version4     & \cmark & \cmark & True  & True  
& \textcolor{red}{False} & \textcolor{red}{False} 
& \textcolor{green}{True} & \textcolor{green}{True} \\
COMAR RTL            & \cmark & \cmark & True  & True  
& \textcolor{red}{False} & \textcolor{red}{False} 
& \textcolor{green}{True} & \textcolor{green}{True} \\
HPC1 RTL             & \cmark & \cmark & True  & True  
& \textcolor{red}{False} & \textcolor{red}{False} 
& \textcolor{green}{True} & \textcolor{green}{True} \\
HPC2 RTL             & \cmark & \cmark & True  & True  
& \textcolor{red}{False} & \textcolor{red}{False} 
& \textcolor{green}{True} & \textcolor{green}{True} \\
PRESENT DOM RTL      & \cmark & \cmark & True  & True  
& \textcolor{red}{False} & \textcolor{red}{False} 
& \textcolor{green}{True} & \textcolor{green}{True} \\
PRESENT HPC1 RTL     & \cmark & \cmark & True  & True  
& \textcolor{red}{False} & \textcolor{red}{False} 
& \textcolor{green}{True} & \textcolor{green}{True} \\
\bottomrule
\end{tabular}
\end{adjustbox}
\end{table}

% \begin{table}[htbp]
% \centering
% \caption{Results obtained for PSCA security vulnerability}
% \label{tab:resultsexpt2}
% \renewcommand{\arraystretch}{1.2}
% \begin{adjustbox}{max width=\columnwidth}
% \begin{tabular}{l cc cc cc}
% \toprule
% \multirow{2}{*}{\textbf{Example}} 
% & \multirow{2}{*}{\textbf{FSM}} 
% & \multirow{2}{*}{\textbf{Reg}} 
% & \multicolumn{2}{c}{\textbf{Expected Op}} 
% & \multicolumn{2}{c}{\textbf{Rebecca + New Verification Flow Op}} \\
% \cmidrule(lr){4-5}
% \cmidrule(lr){6-7}
% & & & $s$ & $t$ & $s$ & $t$ \\
% \midrule
% DOM RTL version4 & \cmark & \cmark & True & True & \textcolor{green}{True} & \textcolor{green}{True} \\
% COMAR RTL        & \cmark & \cmark & True & True & \textcolor{green}{True} & \textcolor{green}{True} \\
% HPC1 RTL         & \cmark & \cmark & True & True & \textcolor{green}{True} & \textcolor{green}{True} \\
% HPC2 RTL         & \cmark & \cmark & True & True & \textcolor{green}{True} & \textcolor{green}{True} \\
% PRESENT DOM RTL  & \cmark & \cmark & True & True & \textcolor{green}{True} & \textcolor{green}{True} \\
% PRESENT HPC1 RTL & \cmark & \cmark & True & True & \textcolor{green}{True} & \textcolor{green}{True} \\
% \bottomrule
% \end{tabular}
% \end{adjustbox}
% \end{table}
\noindent\textbf{Security Validation via TVLA.}
To verify that the HLS-generated RTL is indeed PSCA-secure, we evaluated all designs using a standard TVLA flow. RTL generated by Vitis HLS was synthesized to gate-level using Synopsys Design Compiler, simulated using VCD dumps, and analyzed with Synopsys PrimeTime. Across up to ten million traces with fixed and random plaintexts, all designs exhibited t-test values within the $\pm 4.5$ threshold, confirming first-order security. This validates that the RTL itself is secure and that the failures reported by \textsc{REBECCA} are false positives.
\subsection{Experiment 2}
%Forcefully inducing a PSCA security vulnerability and testing if the proposed verification flow still works}
\begin{figure*}
    \centering
    \includegraphics[width=1\linewidth]{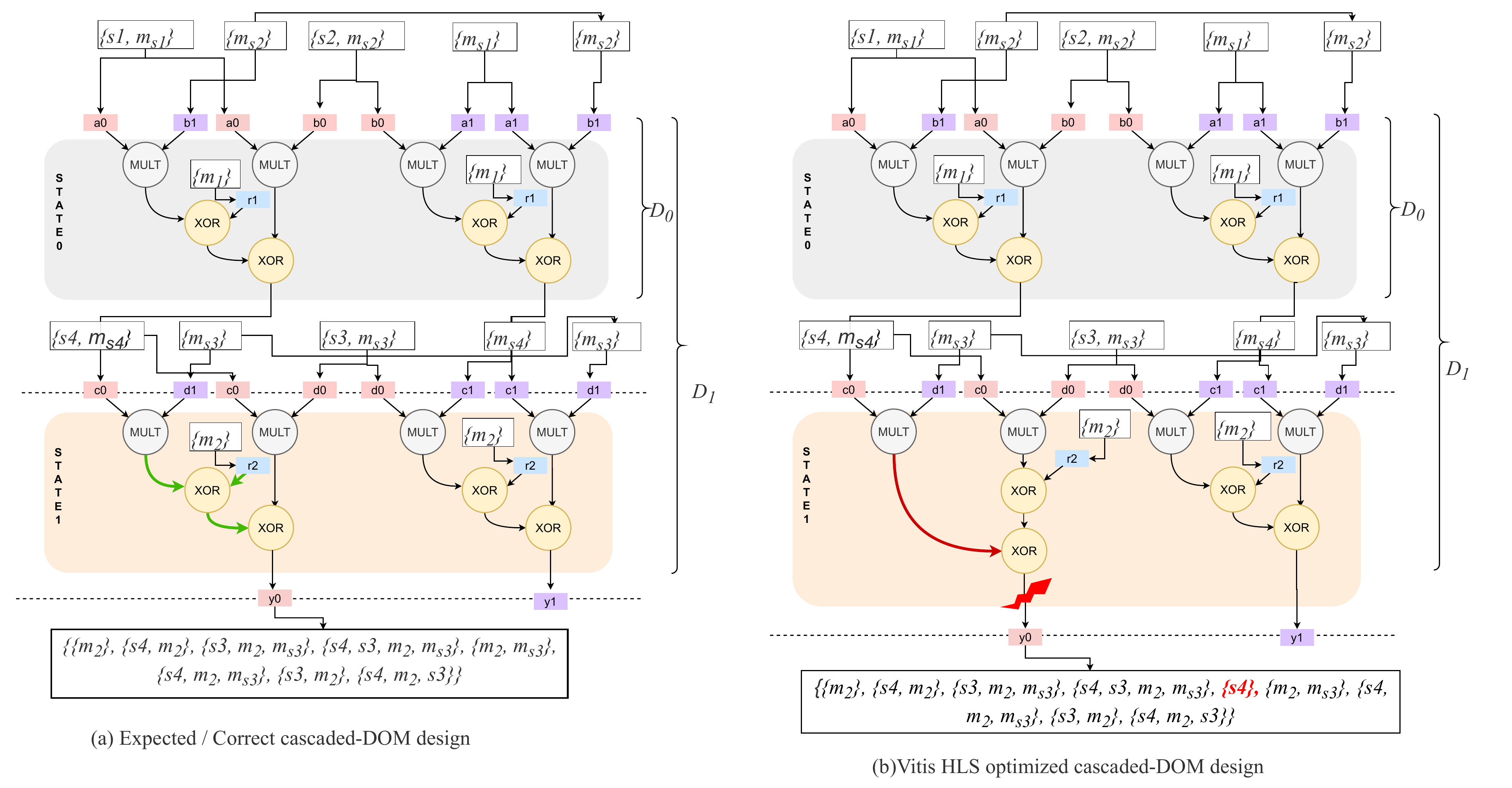}
    \caption{Unrolled Datapath of (a) Correct (unoptimized) cascaded-DOM RTL vs (b) Vitis HLS optimized cascaded-DOM RTL labelled using static (S) rules in using our REBECCA flow demonstrating that the security-impact of optimization are identified by our flow}
    \label{fig:reassociatedlabels}
\end{figure*}
% \textcolor{red}{Write about setup}
Previously in Section \ref{sec:impactofhls}, we had discussed how HLS optimizations may hamper PSCA-security guaranteed by masking by altering or removing the masking related operations in the design. One such optimization is reassociation via expression-balancing that is used by HLS to reduce the depth of the computation tree. 
This optimization alters the sub-epxression evaluation sequence for the output expressions. For example, the cascaded-DOM multiplication design which calculates the product $a \& b \& c$ as a two share masked version $c0, c1 = dom(a0, a1, b0, b1, r1);$ $y0, y1 = dom(c0, c1, d0, d1, r2)$ consists of two multiplication operations masked by two DOM gates one after the other. The second multiplication for example should proceed via the following parenthesized order: $((c0 \otimes d1) \oplus r2) \oplus (c0 \otimes d0)$. A masking flaw can be induced when a reassociation optimization reassociates the mask $r2$ in the masking of the product term $(c0 \otimes d1)$ to the product term $(c0 \otimes d0)$ following the order: $((c0 \otimes d0) \oplus r2) \oplus (c0 \otimes d1)$ resulting in a first order masking flaw in the design. 

\textit{Setup: }In order to check whether our masking verification method can catch such a masking flaw due to re-association, we cause the pragma in Vitis 2022.2 called $pragma\_expression\_balancing$ to be ON within the C code of the cascaded DOM masked multiplication which is the input. The target clock set as 1ns resulting in a $2$ state design. The first state is the masked multiplication of $a$ and $b$. The second state multiplies the result of the first state with $c$. Both the states uses the dom multiplication gadgets with random values $r1$ and $r2$, respectively. We extract the designs corresponding to $state_0$ and $state_1$ according to our splitting procedure in Section \ref{subsec:splitting} and apply our MaskedHLSVerif verification flow, outlined in Algorithm \ref{alg:verify_fsm}, on the statewise designs in the static state. From the verification results in Table \ref{table:metrics_cascadeddom_expressionbal}, we can see that the first state passes the verification but the second state fails the verification. This was caused due to the flaw due to re-association caused $pragma\_expression\_balancing$ ON, which our verification flow correctly catches as a  masking flaw. Figure \ref{fig:reassociatedlabels} demonstrates the difference in labels of the outputs $y0$ in the correct design versus the incorrect reassociated design obtained via Vitis in the above experiment. It can be seen that the labels of $y0$ in the optimized design (Figure \ref{fig:reassociatedlabels}b) has a unmasked secret values in its label set : $\{s4\}$, in addition to the labels in Figure \ref{fig:reassociatedlabels}a, justifying the masking flaw due to re-association at the output $y0$. This demonstrates that, in addition to correctly verifying FSM based RTL output of HLS, our verification flow is able to catch HLS-optimization induced masking flaws in HLS-generated masked hardware.
\begin{table*}[htbp]
\centering
\caption{Metrics for different states of cascaded-DOM RTL obtained via expression balancing optimization in Vitis HLS (S: stable, T: transient)}
\label{table:metrics_cascadeddom_expressionbal}
\renewcommand{\arraystretch}{1.2}

\begin{adjustbox}{max width=\textwidth}
\begin{tabular}{c cc c c c cc cc cc cc cc}
\toprule
\multirow{2}{*}{\textbf{State}} 
& \multicolumn{2}{c}{\textbf{Exec. Time (s)}} 
& \textbf{LOC (C)} 
& \textbf{LOC (RTL)} 
& \textbf{LOC (JSON)} 
& \multicolumn{2}{c}{\textbf{Vars}} 
& \multicolumn{2}{c}{\textbf{Assertions}} 
& \multicolumn{2}{c}{\textbf{Digraph}} 
& \multicolumn{2}{c}{\textbf{Expected}} 
& \multicolumn{2}{c}{\textbf{Actual}} \\

\cmidrule(lr){2-3}
\cmidrule(lr){7-8}
\cmidrule(lr){9-10}
\cmidrule(lr){11-12}
\cmidrule(lr){13-14}
\cmidrule(lr){15-16}

& S & T &  &  &  & S & T & S & T & Nodes & Edges & S & T & S & T \\
\midrule
State 1 & 0.08 & 0.07 & 28 & 80  & 588  & 102 & 187 & 20 & 37 & 17 & 20 & True & True & True & True \\
State 2 & 0.28 & 0.73 & 28 & 121 & 1454 & 405 & 765 & 48 & 93 & 45 & 67 & \textcolor{green}{False} & \textcolor{green}{False} & \textcolor{green}{False} & \textcolor{green}{False} \\
\bottomrule
\end{tabular}
\end{adjustbox}
\end{table*}

\section{Conclusion}
\label{sec:conclusion}

This paper introduces MaskedHLSVerif: a method to verify the masking security of hardware obtained from the masked software implementation of cryptographic algorithms using High-level Synthesis. We first show the limitation of the state-of-the-art masking verification technique in verifying resource shared datapath and a controller FSM. HLS, in particular, generates such design based on the target clock period and latency. We then introduce the concept of state wise active datapath verification to verify such hardware using existing verification method.  This is the first work that targets verification of high-level synthesized masked hardware. Experiments with six cryptographic designs confirm the usefulness of the proposed method. The impact of HLS optimizations on PSCA security is largely unexplored. It would be interesting to explore such an impact with MaskedHLSVerif in the future. 
%In general, masking verification techniques are not scalable. In future, we aim to work on the scalability of hardware masking verification. 

%%% Local Variables:
%%% mode: latex
%%% TeX-master: "main"
%%% End:

\bibliographystyle{ieeetr}
\bibliography{ref}
\end{document}